\documentclass[11pt]{article}
\usepackage[margin=1in]{geometry}
\usepackage{amssymb,amsmath,amsthm}
\usepackage{verbatim,hyperref}

\newtheorem{theorem}{Theorem}
\newtheorem{lemma}{Lemma}

\newtheorem{corollary}{Corollary}

\newtheorem{claim}{Claim}
\newtheorem{proposition}{Proposition}

\theoremstyle{definition}
\newtheorem{definition}{Definition}

\DeclareMathOperator{\alg}{ALG}

\title{Prophet Secretary: Surpassing the $1-1/e$ Barrier}
\author{
Yossi Azar\\
Tel Aviv University\\
azar@tau.ac.il
\and
Ashish Chiplunkar\thanks{This work was done when the author was a post-doc at Tel Aviv University.}\\
EPFL\\
ashish.chiplunkar@gmail.com
\and
Haim Kaplan\\
Tel Aviv University\\
haimk@post.tau.ac.il
}

\date{}

\begin{document}
\maketitle

\begin{abstract}
In the Prophet Secretary problem, samples from a known set of probability distributions arrive one by one in a uniformly random order, and an algorithm must irrevocably pick one of the samples as soon as it arrives. The goal is to maximize the expected value of the sample picked relative to the expected maximum of the distributions. This is one of the most simple and fundamental problems in online decision making that models the process selling one item to a sequence of costumers. For a closely related problem called the Prophet Inequality where the order of the random variables is adversarial, it is known that one can achieve in expectation $1/2$ of the expected maximum, and no better ratio is possible. For the Prophet Secretary problem, that is, when the variables arrive in a random order, Esfandiari et al.\ \cite{EsfandiariHLM_ESA15} showed that one can actually get $1-1/e$ of the maximum. The $1-1/e$ bound was recently extended to more general settings \cite{EhsaniHKS_arXiv17}. Given these results, one might be tempted to believe that $1-1/e$ is the correct bound. We show that this is not the case by providing an algorithm for the Prophet Secretary problem that beats the $1-1/e$ bound and achieves $1-1/e+1/400$ of the optimum value. We also prove a hardness result on the performance of algorithms under a natural restriction which we call deterministic distribution-insensitivity.
\end{abstract}


\section{Introduction}

The Prophet Inequality problem of Krengel and Sucheston \cite{KrengelS_BAMS77,KrengelS_APRT78} is one of the cornerstones in optimal stopping theory. In this problem, probability distributions of a sequence $X_1,\ldots,X_n$ of independent non-negative random variables is given to an algorithm. Thereafter, samples $x_1\sim X_1,\ldots,x_n\sim X_n$ are revealed to the algorithm one by one, and the algorithm is required to pick (at most) one of the $x_i$'s irrevocably as soon as it is revealed. The goal of the algorithm is to maximize its profit, that is, the value of the sample that it picks. It is easy to see that we may only look for \textit{threshold algorithms}, that is, algorithms which come up with a threshold $\alpha_i$ in each round $i$, and accept $x_i$ if and only if $x_i\geq\alpha_i$.

The primary motivation behind studying the Prophet Inequality and its related problems comes from mechanism design. Consider a single-item auction where customers arrive online, and the item is to be sold irrevocably to one of the customers. Suppose that the probability distribution of the valuation of each customer is known in advance. A threshold algorithm naturally translates to a \textit{posted price mechanism}, which is inherently \textit{truthful}, and the goal of maximizing the sample picked translates to maximizing the \textit{social welfare}.

Indeed, if the algorithm knew all the samples $x_1,\ldots,x_n$ in advance, it would pick $\max_ix_i$, resulting in an expected profit of $\mathbb{E}[\max_iX_i]$. The algorithm's performance is, therefore, compared against the benchmark of $\mathbb{E}[\max_iX_i]$. An algorithm is said to have a \textit{competitive ratio} of $c$ if its expected profit is at least $c\cdot\mathbb{E}[\max_iX_i]$, on any sequence $X_1,\ldots,X_n$ of random variables. The classical results of Krengel and Sucheston \cite{KrengelS_BAMS77,KrengelS_APRT78} state that $(1/2)$-competitive algorithms exist, and no algorithm can have a better competitive ratio. One such $(1/2)$-competitive algorithm is the following simple algorithm by Kleinberg and Weinberg \cite{KleinbergW_STOC12}: compute $T=\mathbb{E}[\max_iX_i]/2$, and pick the first $x_i$ which exceeds $T$, if one exists.

Another well known problem in the domain of optimal stopping is the Secretary problem \cite{Dynkin_SMD63,Ferguson_SS89}, where a set of values $\{x_1,\ldots x_n\}$ is chosen adversarially, and revealed to an algorithm in a random order. As before, an algorithm has to pick one of the $x_i$'s as soon as it is revealed. Inspired from this problem, Esfandiari, Hajiaghayi, Liaghat, and Monemizadeh \cite{EsfandiariHLM_ESA15} defined a natural variant of the Prophet Inequality problem called the Prophet Secretary problem. Here, the set $\{X_1,\ldots,X_n\}$ of random variables is adversarial and known to the algorithm in advance, whereas the samples $\{x_i\sim X_i\}$ are revealed in a uniformly random order. Since the order is no longer adversarial, it is natural to expect that algorithms with competitive ratio larger than $1/2$ will exist. Esfandiari et al.\ gave an algorithm which does the following. The algorithm chooses a sequence $1>\alpha_1>\cdots>\alpha_n>0$ of thresholds determined completely by $n$ and independent of the set of distributions. Suppose $\sigma\in\mathcal{S}_n$ is the random order in which the samples are revealed, that is, the $k^{\text{\tiny{th}}}$ sample to be revealed is $x_{\sigma(k)}$. Then the algorithm picks $x_{\sigma(k)}$ for the smallest $k$ such that $x_{\sigma(k)}\geq\alpha_k\cdot\mathbb{E}[\max_i X_i]$, if such an index $k$ exists, otherwise it picks nothing. Esfandiari et al.\ proved that, for a suitable choice of thresholds $\alpha_1,\ldots,\alpha_n$, the algorithm is $(1-1/e)$-competitive.

As an impossibility result, Esfandiari et al.\ \cite{EsfandiariHLM_ESA15} also proved that no algorithm for the prophet secretary problem can have a competitive ratio better than $3/4$. However, observe that the competitive ratio in the particular case where $X_1,\ldots,X_n$ are identical and independent is trivially an upper bound on the competitive ratio of the Prophet Secretary problem. Moreover, if $X_1,\ldots,X_n$ are identical, then the Prophet Inequality and the Prophet Secretary problems are equivalent. Hill and Kertz \cite{HillK_AP82} had already proved that the competitive ratio for IID Prophet Inequality, which we call $c_{\text{iid}}$, is at most $(1.341)^{-1}\approx0.746$. No upper bound better than $c_{\text{iid}}$ is known for the Prophet Secretary problem.

\subsection{Our results}

Our main result is that the (algorithmic) lower bound of $1-1/e$ on the competitive ratio of Prophet Secretary by Esfandiari et al.\ \cite{EsfandiariHLM_ESA15} is not tight. By introducing new techniques in addition to those of Esfandiari et al., we give an algorithm for the Prophet Secretary problem that has a competitive ratio better than $1-1/e$. (It is noteworthy that, in contrast to Abolhassani et al.\ \cite{AbolhassaniEEHK_STOC17} who beat the $1-1/e$ bound only by allowing the algorithm to choose the arrival order of random variables, we beat the bound for random arrival order.)

\begin{theorem}\label{thm_alg}
There is an algorithm for the Prophet Secretary problem with competitive ratio larger than $1-1/e+1/400$.
\end{theorem}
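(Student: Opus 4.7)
The plan is to refine the Esfandiari et al.\ algorithm via a case analysis on how much of $\mathbb{E}[\max_i X_i]$ is contributed by a single random variable. For each $i$ let $v_i = \mathbb{E}[X_i \cdot \mathbb{1}[X_i = \max_j X_j]]$, so that $\sum_i v_i = \mathbb{E}[\max_i X_i]$. Fix a small constant $\beta$ to be optimized. Call the instance \emph{spread} if $v_i \leq \beta \cdot \mathbb{E}[\max_j X_j]$ for every $i$, and \emph{concentrated} otherwise, with $X_j$ any variable violating the bound. Since the $v_i$ can be computed from the known distributions, the algorithm first decides which regime it is in and then runs the tailored subroutine. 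This case split alone makes the algorithm distribution-sensitive, which is consistent with the hardness result mentioned in the abstract for distribution-insensitive algorithms.

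For the spread case, I would rerun the Esfandiari analysis and show that its loss against $1 - 1/e$ is essentially attained only when a single variable carries the contribution to the maximum. Intuitively, the round-$k$ bound on the acceptance probability used by Esfandiari et al.\ is tight only when the event ``some arrival by round $k$ exceeds $\alpha_k \cdot \mathbb{E}[\max_j X_j]$'' is dominated by one variable; when every $v_i \leq \beta \cdot \mathbb{E}[\max_j X_j]$, a refined second-moment / inclusion-exclusion calculation (using that the sum $\sum_i \Pr[X_i \geq \alpha_k \cdot \mathbb{E}[\max_j X_j]]$ is spread out) yields a strict improvement over $1 - 1/e$ that grows with $1/\beta$.

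For the concentrated case, the improvement comes from exploiting knowledge of the (random) arrival time $\sigma^{-1}(j)$ of the heavy variable $X_j$, which the algorithm observes because arrivals are identified. I would use a two-phase, distribution-sensitive threshold schedule: before $X_j$ arrives, inflate the Esfandiari thresholds slightly, so the algorithm is less likely to commit to a value smaller than what $X_j$ typically contributes; once $X_j$ arrives (and is either rejected or was realized too small), deflate the thresholds so the algorithm captures more of the residual contribution $\sum_{i \neq j} v_i$ from the remaining rounds. The analysis is then a round-by-round computation that trades a small loss on the branch where $X_j$'s realization is atypical against a larger gain on the complementary branch where the algorithm successfully waits for $X_j$ and accepts it.

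The main obstacle is quantitative rather than conceptual: balancing $\beta$, the inflation factor, and the deflation factor so that both cases simultaneously improve on $1 - 1/e$ by at least $1/400$. The hardest individual step is likely the concentrated case, because coupling the thresholds to the arrival time of $X_j$ introduces dependencies across rounds that must be tracked carefully; in particular, one has to argue that the inflated thresholds in the pre-arrival phase do not allow so much probability mass to escape to the post-arrival phase that the guarantee from the original Esfandiari bound is lost before the gain from accepting $X_j$ can be realized.
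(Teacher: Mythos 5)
Your high-level instinct---split into a ``spread'' regime handled by a refinement of the Esfandiari analysis and a ``concentrated'' regime handled by a specially-tuned threshold schedule---is in the same spirit as the paper, but there are two concrete gaps, one of which is fatal in the model the paper works in.

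\textbf{The concentrated-case plan is not implementable in this model.} You propose to inflate the thresholds until the heavy variable $X_j$ arrives and then deflate them. But the protocol here is that the algorithm commits to $\alpha_k$ \emph{before} round $k$'s sample is drawn, does not learn $\sigma$, and does not even learn the realized value of a rejected sample. Consequently the algorithm has no way to detect whether $X_j$ ``has arrived and been rejected''; it cannot observe $\sigma^{-1}(j)$ at all. Your phrase ``the algorithm observes because arrivals are identified'' is a model assumption that does not hold here, and the entire two-phase adaptive schedule is built on it. The paper avoids this issue entirely: in its analogue of your concentrated case (Case 3), it uses a single \emph{uniform} threshold $T$ for every round. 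The gain comes not from adapting thresholds to $X_j$'s arrival time but from a probabilistic argument: since the other $X_i$'s are collectively unlikely to exceed $T$, the prominent variable is encountered with probability close to $1$, and every other variable is encountered with probability close to $1/2$ (Lemma~\ref{lem_qi}). This recovers most of $\mu_1=\int_T^\infty \Pr[X_1\geq x]\,dx$ plus roughly half of $\sum_{i>1}\mu_i$ without any adaptivity.

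\textbf{The case split and the spread-case argument are underspecified in a way that matters.} The paper does \emph{not} partition on $v_i=\mathbb{E}[X_i\cdot\mathbb{1}[X_i=\max_j X_j]]$. It uses three cases defined directly in terms of the quantities that enter Esfandiari et al.'s bounds: whether $\int_0^a\Pr[X\geq x]\,dx\leq ca$ (Case 1), whether $\sum_i\Pr[X_i\geq T]\geq b$ (Case 2), and the remainder (Case 3). Cases 1 and 2 each give a provable pointwise strengthening of the key inequality $\int_{\alpha_k}^\infty\Pr[z_k\geq x]\,dx\geq\frac{\theta(k)}{n}(1-\alpha_k)$ (Lemmas~\ref{lem_case1} and~\ref{lem_case2}), which is then propagated through a bespoke threshold schedule. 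Your ``second-moment / inclusion-exclusion'' refinement for the spread case is asserted rather than exhibited; in particular, having all $v_i\leq\beta$ does not obviously translate into a lower bound on $\sum_i\Pr[X_i\geq \alpha_k]$ or into control over $\int_0^{\alpha_k}\Pr[X\geq x]\,dx$, which are the quantities the Esfandiari bound actually depends on. You would need to show that small $v_i$ for all $i$ forces one of those two structural conditions, and it does not: one can arrange the mass so that $v_i$ is uniformly small yet the tail probabilities $\Pr[X_i\geq T]$ are concentrated on a single index. The paper's two conditions are chosen precisely so that their simultaneous failure \emph{provably} yields the prominent-variable structure (Lemmas~\ref{lem_thetan} and~\ref{lem_x1}), whereas your dichotomy does not come with such a guarantee.

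In short: the decomposition via $v_i$ is a genuine alternative idea, but as stated it neither plugs into the Esfandiari machinery nor forces the structure you need in the concentrated case, and the concentrated-case algorithm relies on information the model withholds.
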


Observe that the algorithms for the Prophet Inequality problem as well as the Prophet Secretary problem which we stated previously are both simple in the following sense. Both algorithms are actually oblivious to the probability distributions of $X_1,\ldots,X_n$, they only need to know $\mathbb{E}[\max_iX_i]$, which they compete against. Moreover, both algorithms choose their threshold(s) deterministically. We call such algorithms \textit{deterministic distribution-insensitive algorithms}. Our second result is a hardness result for deterministic distribution-insensitive algorithms.

\begin{theorem}\label{thm_hardness}
Deterministic distribution-insensitive algorithms for the Prophet Secretary problem cannot have a competitive ratio larger than $11/15\approx0.733$.
\end{theorem}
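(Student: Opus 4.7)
My plan for Theorem~\ref{thm_hardness} is to exhibit a small family of Prophet Secretary instances, all sharing the same value of $E:=\mathbb{E}[\max_iX_i]$, on which no single deterministic distribution-insensitive (DDI) strategy can perform well simultaneously. Since a DDI algorithm's acceptance rule depends only on $n$, $E$, and the observed values, it must commit to the same policy across the family, and we can quantify how the necessary hedging forces a loss on at least one instance.

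First, I would argue that for the purposes of the upper bound it suffices to restrict attention to threshold algorithms of the form ``at step $k$, accept $x_{\sigma(k)}$ iff $x_{\sigma(k)}\ge\alpha_k E$,'' for a deterministic sequence $(\alpha_k)$ depending only on $n$. The problem is scale-invariant, so a DDI algorithm's thresholds may be taken to be multiples of $E$; moreover, by choosing hard instances whose observed prefixes are uninformative about the distribution until a ``trigger'' event occurs (for example, instances where every variable equals $0$ with high probability and only the trigger carries nonzero mass), the values seen before the trigger reveal nothing beyond the position $k$, so any history-dependent DDI policy collapses, on the hard family, to a fixed-threshold policy.

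Next I would construct the hard family by generalizing the $3/4$-hard example of Esfandiari et al.: combine a few moderate deterministic values of size $\approx 1$ with one rare large value $1/\epsilon$ that occurs with probability $\epsilon$, and tune the mix so that $E=1$ uniformly across the family. On such instances the expected payoff of a fixed-threshold policy $(\alpha_k)$ admits a clean closed form that separates into a ``moderate-value'' contribution and a ``big-value'' contribution, and varying the family parameter drives these two contributions in opposite directions with respect to the $\alpha_k$'s. One then minimizes the worst-case competitive ratio over $(\alpha_k)$ and shows that the resulting minimax equals $11/15$.

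The main obstacles I anticipate are (a) making the reduction to fixed thresholds fully tight, i.e., verifying on the hard family that no history-dependent DDI refinement can extract useful information from observed samples, and (b) pinning down the precise parameters of the adversarial family so that the optimization delivers $11/15$ rather than a weaker bound such as the $3/4$ obtainable from a single Esfandiari-type instance. The rational value $11/15$ is suggestive: the extremal instances should involve a small number of discrete ``types'' of random variables with rational probabilities, so the final tuning should reduce to a small algebraic system rather than a transcendental optimization.
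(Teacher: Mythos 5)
Your high-level plan is the right one and matches the paper's in spirit: fix $n$ small, exhibit a small family of instances all normalized to $\mathbb{E}[\max_iX_i]=1$, let the adversary pick the instance after seeing the deterministic thresholds, and show that any choice of $(\alpha_k)$ must lose at least $4/15$ on some member of the family. Your concern (a), about reducing history-dependent DDI policies to fixed thresholds, is actually moot in the paper's model: the definition of the Prophet Secretary interaction stipulates that the algorithm declares $\alpha_k$ before seeing $x_{\sigma(k)}$ and never learns the value of a rejected sample, so a deterministic distribution-insensitive algorithm \emph{is} a fixed sequence of thresholds by fiat; no reduction is required.

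The real gap is that you have sketched the shape of the construction but have not produced it, and the content of the theorem lives entirely in that construction and its optimization. The paper takes $n=3$, assumes WLOG $\alpha_3=0$, and uses four instances: (1) $X_1\equiv 1$, $X_2\equiv X_3\equiv 0$; (2) $X_1\equiv 1$, $X_2\equiv X_3\equiv\alpha_1$; (3) $X_1$ a two-point distribution with mean $1$ whose low value is just below $\alpha_1$, and $X_2\equiv X_3\equiv\alpha_2$ (used when $\alpha_1>\alpha_2$); (4) $X_1$ a two-point distribution whose low value is just below $\min(\alpha_1,\alpha_2)$, and $X_2\equiv X_3\equiv 0$. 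On these instances the algorithm's expected profit is computed in closed form as $(1+2\alpha_1)/3$, $(2-\alpha_1+2\alpha_2)/3$, and $(3-2\min(\alpha_1,\alpha_2))/3$, respectively, and in the main case $\alpha_1>\alpha_2>0$ the bound $11/15$ follows from the weighted-average inequality $\min(A,B,C)\le(A+2B+2C)/5$ applied to these three quantities, with the weights $(1,2,2)$ chosen so that the $\alpha_1$ and $\alpha_2$ terms cancel. The remaining boundary cases ($\alpha_2=0$, or $\alpha_1\le\alpha_2$) give even smaller bounds. Without this explicit two-point construction, the deliberate near-coincidence between the low value of $X_1$ and the thresholds, and the $(1,2,2)$ averaging trick, the proposal does not yet establish $11/15$ — your own obstacle (b) is exactly where the proof actually lies.
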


This improves the upper bound of $3/4$ by Esfandiari et al.\ \cite{EsfandiariHLM_ESA15}, as well as the better upper bound of $c_{iid}\approx0.746$ due to Hill and Kertz \cite{HillK_AP82}, for deterministic distribution-insensitive algorithms.

\subsection{Our techniques}

We give here the high-level ideas behind the design of our algorithm for the Prophet Secretary problem. Recall that $\mathbb{E}[\max_i X_i]=\int_{0}^{\infty}\Pr[\max_i X_i\geq x]dx$, and imagine that every interval $I\subseteq\mathbb{R}$ contributes the value $\int_{x\in I}\Pr[\max_i X_i\geq x]dx$ to $\mathbb{E}[\max_i X_i]$. We divide the Prophet Secretary instances into three categories as follows. Loosely speaking, the first category contains instances in which the contribution of the interval $[0,1-1/e]$ to $\mathbb{E}[\max_i X_i]$ is small. The second category contains instances in which, in expectation, more than one $X_i$'s exceed a certain threshold. For each of these categories, we strengthen Esfandiari et al.'s algorithm to achieve a better performance.

The third category is most interesting and includes all the remaining instances. For these instances we prove that one of the $X_i$'s (say $X_1$) is larger than all the rest with high probability, and also has a sufficiently large expectation. For these instances, our algorithm sets the same threshold for all samples. We prove that with high probability, the algorithm does not pick a sample before it encounters $X_1$. As a consequence, it extracts most of the expected value of $X_1$ as its profit. Moreover, the algorithm encounters, on an average, half of the other $X_i$'s before it sees $X_1$, because the samples arrive in a uniformly random order. Thus, even in the unlikely event that one of the $X_i$'s ($i>1$) exceeds the threshold, the algorithm extracts its value with probability close to $1/2$. (This is necessary because $\mathbb{E}[X_1]$, although large, is not guaranteed to be sufficient by itself for the algorithm to achieve its targeted competitive ratio.)

\subsection{Related Work}

\noindent{\textbf{Prophet Inequalities (worst case arrival order):}} Ever since the seminal work of Krengel and Sucheston \cite{KrengelS_BAMS77,KrengelS_APRT78}, the Prophet Inequality problem has been studied in a variety of settings. One of the most natural variants is perhaps the multiple choice Prophet Inequality, where the algorithm is required to pick at most $k$ of the $x_i$'s ($k>1$), and has to compete against the expectation of the sum of the $k$ largest random variables. Alaei \cite{Alaei_FOCS11} gave an algorithm for this problem with competitive ratio $1-O(k^{-1/2})$, which is known to be asymptotically optimal. Kleinberg and Weinberg \cite{KleinbergW_STOC12} considered the Matroid Prophet Inequality problem, where the feasible subsets of random variables are independent sets of a given matroid. They gave a $(1/2)$-competitive algorithm even for this more general problem than the classical Prophet Inequality problem. Going beyond this, Prophet Inequalities where the feasibility constraint is an intersection of matroids \cite{KleinbergW_STOC12}, or even an  arbitrary downward-closed set system \cite{Rubinstein_STOC16,RubinsteinS_SODA17}, have been studied.

\noindent{\textbf{IID Prophet Inequalities:}} The case of identical distributions is particularly interesting because in this setting, the Prophet Inequality and the Prophet Secretary problems coincide. Already more than two decades ago, Hill and Kertz \cite{HillK_AP82} gave an implicit characterization of $c_{iid}$, the competitive ratio for identical distributions, by analyzing the natural dynamic programming algorithm which performs optimally on every instance. However, they could only find the numerical value of $c_{iid}(n)$, the competitive ratio when we have $n$ independent identical random variables, for all $n\leq 10000$. They conjectured that $c_{iid}(n)$ decreases as $n$ increases, and proved that for $n=10000$, the competitive ratio is approximately $(1.341)^{-1}\approx0.746$. For $n>10000$, they could only prove a lower bound of $1-1/e$ on $c_{\text{iid}}(n)$. Only recently, Abolhassani et al.\ \cite{AbolhassaniEEHK_STOC17} improved the lower bound for large $n$ to $0.738$ by giving a new algorithm. Shortly thereafter, Correa, Foncea, Hoeksma, Oosterwijk, and Vredeveld \cite{CorreaFHOV_EC17} took an entirely new approach to the problem and came up with yet another algorithm. They proved that its competitive ratio is at least $c_{\text{cfhov}}>0.745$, a constant which they defined implicitly. Surprisingly, Hill and Kertz's implicit characterization of $c_{iid}$ and Correa et al's implicit definition of $c_{\text{cfhov}}$, although quite different from their appearance, are actually the same. This means that $c_{iid}=c_{\text{cfhov}}\in[0.745,0.746]$, and that the algorithm of Correa et al.\ \cite{CorreaFHOV_EC17}, in fact, achieves the optimal competitive ratio. Correa et al.\ have made this remark already, and we give the explicit proof in Appendix \ref{app_iid}.

\noindent{\textbf{Best case arrival order:}} A natural variant of the Prophet Inequality and the Prophet Secretary problems is the one where the order in which the samples are revealed is chosen by the algorithm (the \text{best order} case). Yan \cite{Yan_SODA11} gave a $(1-1/e)$-competitive algorithm for the best case arrival order problem. Esfandiari et al.\ \cite{EsfandiariHLM_ESA15} proved that this bound holds even if the arrival order is random, as stated earlier, which suggests that the best case competitive ratio might be substantially better. Pursuing this line of research, Abolhassani et al.\ \cite{AbolhassaniEEHK_STOC17} proved that as long as there are sufficiently many (independent) copies of each random variable, an algorithm can come up with an appropriate order for which the competitive ratio is arbitrarily close to $c_{iid}$.\footnote{In fact, this claim holds even in the random order case, where $O(\log n)$ multiplicity is sufficient. If the algorithm is allowed to choose the order, even $O(1)$ multiplicity is sufficient.} In the absence of the multiplicity assumption, it is not known whether we can attain a competitive ratio for the best order case which is better than the random order case.

\noindent{\textbf{Matroid Prophet Secretary:}} Very recently, Ehsani, Hajiaghayi, Kesselheim, and Singla \cite{EhsaniHKS_arXiv17} have considered the generalization of Prophet Secretary to matroids, analogous to Kleinberg and Weinberg's generalization of Prophet Inequality to matroids. Ehsani et al.\ give an algorithm that achieves the $1-1/e$ bound even with matroid feasibility constraint.

\noindent{\textbf{Posted price mechanism design:}} As remarked earlier, algorithms for problems like the Prophet Inequality and Prophet Secretary correspond to posted price mechanisms for approximately maximizing social welfare. A parallel line of work has been to design posted price mechanisms under similar settings for approximate revenue maximization, taking as benchmark the revenue obtained by Myerson's mechanism. For more information on posted price mechanisms for online arrival of buyers, see \cite{HajiaghayiKS_AAAI07,ChawlaHMS_STOC10,Alaei_FOCS11,CorreaFHOV_EC17}, and the references therein.

\subsection{Organization of the paper}

Towards proving Theorem \ref{thm_alg}, we introduce some notation in Section \ref{sec_prelim}. Since we reuse some part of Esfandiari et al.'s proof, we give an overview of that proof in Section \ref{sec_old}. We present our algorithm in Section \ref{sec_alg}, and its analysis in the subsequent subsections, thereby proving Theorem \ref{thm_alg}. We devote Section \ref{sec_hardness} to prove Theorem \ref{thm_hardness}, where we give an adversarial strategy to defeat every distribution-insensitive algorithm.

\section{Preliminaries}\label{sec_prelim}

Let $X_1,\ldots,X_n$ be independent non-negative random variables and let $X=\max_iX_i$. By scaling the random variables appropriately, we assume $\mathbb{E}[X]=1$ without loss of generality throughout this paper.

In the \textit{Prophet Secretary} problem, the following interaction takes place between an adversary ADV, an algorithm ALG, and a third player, say RAND, responsible for all the randomness. (We abuse notation and also denote the expected profit of the algorithm by $\alg$.)
\begin{enumerate}
\item ADV declares a set $X_1,\ldots,X_n$ of independent non-negative random variables such that $\mathbb{E}[\max_i X_i]=1$. (More specifically, ADV reveals the distribution functions of the random variables.)
\item RAND samples a permutation $\sigma\in\mathcal{S}_n$ over the indices $\{1,\ldots,n\}$ uniformly at random and keeps it private.
\item For $k=1$ to $n$,
\begin{enumerate}
\item ALG declares a threshold $\alpha_k$.
\item RAND samples $x_{\sigma(k)}\sim X_{\sigma(k)}$. If $x_{\sigma(k)}\geq\alpha_k$, then the ALG is given the profit $x_{\sigma(k)}$ and the loop breaks.
\end{enumerate}
\end{enumerate}
Note that in the above setting, we assume that the algorithm does not know $\sigma$, and it does not know $x_{\sigma(k)}$ even if it rejects $x_{\sigma(k)}$. 

In the more restricted setting of the \textit{Distribution-Insensitive Prophet Secretary} problem, ADV, instead of making the random variables $X_1,\ldots,X_n$ public, sends them to RAND only. Thus, the algorithm only knows that $\mathbb{E}[\max_iX_i]$ is one. The thresholds $\alpha_1,\ldots,\alpha_n$ chosen by the algorithm are thus determined solely by $n$, and are oblivious to the probability distributions of $X_1,\ldots,X_n$. Moreover, if an algorithm chooses $\alpha_1,\ldots,\alpha_n$ deterministically, we call it a \textit{deterministic distribution-insensitive} algorithm. 
The $(1-1/e)$-competitive algorithm of Esfandiari et al.\ \cite{EsfandiariHLM_ESA15} is a deterministic distribution-insensitive algorithm. In contrast, the algorithm that we design is sensitive to the distributions of $X_1,\ldots,X_n$.

\section{Overview of the old analysis}\label{sec_old}

We briefly restate those parts of the analysis by Esfandiari et al.\ which we require in our proof. Let $\alpha_1,\ldots,\alpha_n$ be the thresholds chosen by an algorithm. Let the random variable $z_k$ denote the profit obtained by the algorithm from the $k^{\text{\tiny{th}}}$ round. That is, $z_k$ is equal to $x_{\sigma(k)}$ if the algorithm picked $x_{\sigma(k)}$, the $k^{\text{\tiny{th}}}$ sample; otherwise $z_k$ is zero. Then
\begin{equation}\label{eqn_alg_esa}
\alg=\sum_{k=1}^n\mathbb{E}[z_k]=\sum_{k=1}^n\int_0^{\infty}\Pr[z_k\geq x]dx=\sum_{k=1}^n\int_0^{\alpha_k}\Pr[z_k\geq x]dx+\sum_{k=1}^n\int_{\alpha_k}^{\infty}\Pr[z_k\geq x]dx\text{.}
\end{equation}
The two terms in the last expression above are bounded from below as follows. Let $\theta(k)$ denote the probability that the algorithm does not choose a value from the first $k$ samples. Let $\alpha_{n+1}=0$. Then we have,

\begin{proposition}[Lemma 10 of \cite{EsfandiariHLM_ESA15}]\label{prop_0}
\[\sum_{k=1}^n\int_0^{\alpha_k}\Pr[z_k\geq x]dx=\sum_{k=1}^n(1-\theta(k))(\alpha_k-\alpha_{k+1})=\alpha_1-\sum_{k=1}^n\theta(k)(\alpha_k-\alpha_{k+1})\]
\end{proposition}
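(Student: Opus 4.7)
The plan is to reduce each integral $\int_0^{\alpha_k}\Pr[z_k\geq x]\,dx$ to a single number by observing that the integrand is constant on the interval of integration, then to rearrange the resulting sum by Abel summation.

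First I would note the following dichotomy for $z_k$: either the algorithm picks the $k^{\text{\tiny{th}}}$ sample, in which case $z_k=x_{\sigma(k)}\geq\alpha_k$, or it does not pick it, in which case $z_k=0$. Consequently, for every $x\in(0,\alpha_k]$, the event $\{z_k\geq x\}$ coincides with the event ``the algorithm picks in round $k$,'' and the latter has probability $\theta(k-1)-\theta(k)$, where we set $\theta(0):=1$. Integrating a constant yields
\[
\int_0^{\alpha_k}\Pr[z_k\geq x]\,dx=\alpha_k\bigl(\theta(k-1)-\theta(k)\bigr).
\]

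Next I would apply summation by parts to the sum $\sum_{k=1}^n\alpha_k(\theta(k-1)-\theta(k))$. Splitting the sum and re-indexing the $\theta(k-1)$ term as $\theta(k)$ produces, with the boundary conventions $\theta(0)=1$ and $\alpha_{n+1}=0$, the expression
\[
\alpha_1-\sum_{k=1}^n\theta(k)(\alpha_k-\alpha_{k+1}),
\]
which is the second equality in the proposition. The first equality then follows from the telescoping identity $\sum_{k=1}^n(\alpha_k-\alpha_{k+1})=\alpha_1-\alpha_{n+1}=\alpha_1$, which rewrites the above as $\sum_{k=1}^n(1-\theta(k))(\alpha_k-\alpha_{k+1})$.

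There is no real obstacle here; the entire content of the proposition is the observation that $\Pr[z_k\geq x]$ is constant on $(0,\alpha_k]$ combined with Abel summation. The only thing to be careful about is the boundary bookkeeping, specifically that $\theta(0)=1$ (since the algorithm has chosen nothing before round $1$) and $\alpha_{n+1}=0$ (by definition), so that the telescoping and the reindexing line up correctly. Notably, the identity does not require the thresholds to be monotone; it is a purely algebraic restatement of the integral sum.
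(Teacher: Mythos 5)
Your proof is correct. The paper cites this proposition as Lemma~10 of Esfandiari et al.\ and does not re-prove it, so there is no in-paper argument to compare against; your derivation --- that $\Pr[z_k\geq x]$ is constant on $(0,\alpha_k]$ with value $\theta(k-1)-\theta(k)$, followed by Abel summation with the conventions $\theta(0)=1$ and $\alpha_{n+1}=0$ --- is the standard argument and matches the proof in the cited work.
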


\begin{proposition}[Lemma 11 of \cite{EsfandiariHLM_ESA15}]\label{prop_infty}
$\Pr[z_k\geq x]\geq\frac{\theta(k)}{n}\Pr[X\geq x]$ for $x\geq\alpha_k$, and
\[\int_{\alpha_k}^{\infty}\Pr[z_k\geq x]dx\geq\frac{\theta(k)}{n}\left(1-\int_0^{\alpha_k}\Pr[X\geq x]dx\right)\geq\frac{\theta(k)}{n}\cdot(1-\alpha_k)\text{.}\]
\end{proposition}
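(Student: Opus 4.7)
The plan is to decompose $\Pr[z_k\geq x]$ by conditioning on which index lands in position $k$, establish a correlation inequality, and then integrate. For $x\geq\alpha_k$, the event $\{z_k\geq x\}$ is exactly $E_{k-1}\cap\{x_{\sigma(k)}\geq x\}$, where $E_{k-1}$ is the event of no pick in rounds $1,\ldots,k-1$; indeed, under $E_{k-1}$, a sample $x_{\sigma(k)}\geq x\geq\alpha_k$ automatically clears the round-$k$ threshold and is taken. Conditioning on $\sigma(k)=i$ and using that $x_i$ is independent of the history $(\sigma(1),\ldots,\sigma(k-1),x_{\sigma(1)},\ldots,x_{\sigma(k-1)})$ on which $E_{k-1}$ depends, I would write
\[
\Pr[z_k\geq x]=\frac{1}{n}\sum_{i=1}^n\psi_i\cdot\Pr[X_i\geq x],\qquad\psi_i:=\Pr[E_{k-1}\mid\sigma(k)=i].
\]
A symmetric decomposition gives $\theta(k)=\frac{1}{n}\sum_i\psi_i\cdot\Pr[X_i<\alpha_k]$, and $\theta(k-1)=\frac{1}{n}\sum_i\psi_i$.

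The heart of the argument is a correlation inequality: $\sum_i\psi_i\Pr[X_i\geq x]\geq\theta(k)\sum_i\Pr[X_i\geq x]$. Combined with the union bound $\sum_i\Pr[X_i\geq x]\geq\Pr[X\geq x]$, this yields the claimed pointwise inequality. To justify the correlation, observe that conditioning $\sigma(k)=i$ effectively ``removes'' $X_i$ from the pool of random variables that could have triggered a pick in the first $k-1$ rounds; $\psi_i$ is therefore large exactly when $X_i$ tends to be large, which is the same regime in which $\Pr[X_i\geq x]$ is large. I would make this formal by a coupling argument: for any two indices $i_1,i_2$ with $X_{i_1}$ stochastically dominating $X_{i_2}$, couple the experiments ``$\sigma(k)=i_1$'' and ``$\sigma(k)=i_2$'' on the same random assignment of the remaining $n-2$ indices. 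In the case where $i_2$ lands at some round $\ell\leq k-1$ in the first experiment (and $i_1$ at $\ell$ in the second), the conditions for $E_{k-1}$ are identical off position $\ell$, while at position $\ell$ we require $x_{i_2}<\alpha_\ell$ versus $x_{i_1}<\alpha_\ell$; stochastic dominance makes the former more likely. Hence $\psi_{i_1}\geq\psi_{i_2}$ whenever $\Pr[X_{i_1}\geq x]\geq\Pr[X_{i_2}\geq x]$, giving co-monotonicity across stochastically comparable pairs and hence the desired Chebyshev-type sum inequality.

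Finally, the integral bound follows by integrating from $\alpha_k$ to $\infty$ and using the normalization $\int_0^\infty\Pr[X\geq x]\,dx=\mathbb{E}[X]=1$:
\[
\int_{\alpha_k}^\infty\Pr[z_k\geq x]\,dx\;\geq\;\frac{\theta(k)}{n}\!\left(\int_0^\infty\Pr[X\geq x]\,dx-\int_0^{\alpha_k}\Pr[X\geq x]\,dx\right)=\frac{\theta(k)}{n}\!\left(1-\int_0^{\alpha_k}\Pr[X\geq x]\,dx\right),
\]
and $\Pr[X\geq x]\leq 1$ on $[0,\alpha_k]$ gives the final weakening to $\frac{\theta(k)}{n}(1-\alpha_k)$. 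The hard part will be the correlation step: the coupling only directly compares stochastically ordered pairs, so upgrading pairwise co-monotonicity to the full sum inequality $\sum_i\psi_ip_i\geq\theta(k)\sum_ip_i$ for arbitrary distributions will require exploiting the monotone (decreasing-threshold) structure of the algorithm's $\alpha_j$'s more carefully, rather than a naive rearrangement.
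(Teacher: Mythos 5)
Your decomposition $\Pr[z_k\geq x]=\frac{1}{n}\sum_i\psi_i\Pr[X_i\geq x]$ with $\psi_i=q_{-i}(k-1)$ is correct and is the standard opening move, and the final integration step (using $\mathbb{E}[X]=1$ and $\Pr[X\geq x]\leq 1$) is also fine. The problem is the middle step.

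You frame the key step as a correlation inequality $\sum_i\psi_ip_i\geq\theta(k)\sum_ip_i$ (with $p_i=\Pr[X_i\geq x]$) and try to get it from pairwise comonotonicity via stochastic dominance plus a Chebyshev-type sum inequality. This is a dead end for two reasons, not just the one you flag. First, as you note, stochastic dominance is only a partial order, so you do not get comonotonicity across all pairs, and Chebyshev's sum inequality needs $(\psi_i-\psi_j)(p_i-p_j)\geq0$ for \emph{every} pair. Second, and more fundamentally, even for pairs that are stochastically ordered, Chebyshev needs comonotonicity of $\psi_i$ with $p_i=\Pr[X_i\geq x]$ \emph{at this particular} $x$. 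Your coupling gives $\psi_{i_1}\geq\psi_{i_2}$ under the hypothesis $X_{i_1}\succeq X_{i_2}$ (full stochastic dominance), but for Chebyshev you would need it under the weaker hypothesis $p_{i_1}\geq p_{i_2}$, and that implication is simply false. So the correlation route is doomed without a much heavier structural argument.

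The actual argument (from Esfandiari et al., which the paper cites and does not reprove) is both simpler and stronger: one proves the \emph{pointwise} bound $\psi_i=q_{-i}(k-1)\geq\theta(k)$ for every single $i$, which makes the sum inequality trivial and needs no correlation reasoning at all. This pointwise bound crucially uses that the thresholds are non-increasing, $\alpha_1\geq\alpha_2\geq\cdots$. To see it, write $\theta(k)=\frac{1}{n}\sum_{m=1}^n\Pr\bigl[\text{reject first }k\mid\sigma^{-1}(i)=m\bigr]$. For $m\geq k$, rejecting the first $k$ implies rejecting the first $k-1$, and conditioned on $\sigma^{-1}(i)\geq k$ the first $k-1$ positions have exactly the distribution defining $q_{-i}(k-1)$. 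For $m<k$, rejecting the first $k$ implies rejecting positions $\{1,\ldots,k\}\setminus\{m\}$; conditioned on $\sigma^{-1}(i)=m$, those $k-1$ positions hold a uniformly random sequence of $k-1$ distinct indices from $[n]\setminus\{i\}$ facing thresholds $\alpha_1,\ldots,\alpha_{m-1},\alpha_{m+1},\ldots,\alpha_k$. Couple this with the $q_{-i}(k-1)$-experiment by placing the same index sequence into positions $1,\ldots,k-1$, facing $\alpha_1,\ldots,\alpha_{k-1}$. Since $\alpha_1\geq\cdots\geq\alpha_k$, the second experiment's threshold at the $t$-th slot is $\geq$ the first's, so rejection in the first implies rejection in the second, giving $\Pr[\cdot\mid\sigma^{-1}(i)=m]\leq q_{-i}(k-1)$ for every $m$, hence $\theta(k)\leq q_{-i}(k-1)$. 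Note this coupling compares the two possible positions of the same index $i$, not two different indices with stochastically ordered laws—that is the conceptual shift your attempt is missing. Finally, the monotonicity of the $\alpha$'s is genuinely necessary: with $n=2$, $\alpha_1<\alpha_2$, one can easily construct $X_1,X_2$ for which $q_{-1}(1)<\theta(2)$, so any proof that ignores the threshold order cannot work.
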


Substituting these bounds into (\ref{eqn_alg_esa}), we get a lower bound on $\alg$ which depends on the $\theta(k)$'s. Esfandiari et al.\ carefully choose the thresholds $\alpha_1,\ldots,\alpha_n$ in such a way that this dependence is eliminated, and $\alpha_1\approx1-1/e$ is a lower bound on $\alg$, and hence, on the competitive ratio.

\section{The improved Algorithm}\label{sec_alg}

The main intuition behind our Prophet Secretary algorithm is the following. Suppose that the set $\{X_1,\ldots,X_n\}$ of random variables is such that for some $a$ and $c$, the average value of $\Pr[X\geq x]$ over $x\in[0,a]$ is at most $c<1$. Then we can use this to strengthen Proposition \ref{prop_infty}. Next, suppose that for some $T$, $b$, and $d$, $\sum_i\Pr[X_i\geq x]$ is at least $b$ and $\int_T^{\infty}\Pr[X\geq x]dx=d$. Then we also exploit this to strengthen Proposition \ref{prop_infty}. In either case, we improve the competitive ratio. However, there exist instances in which none of these improvements is possible. We prove that such instances must have one of the $X_i$'s that is bounded away from zero with high probability, while the rest are close to zero with high probability. In this case, we prove that using a uniform threshold for all samples suffices.

Our improved algorithm for the Prophet Secretary problem has the following four parameters: $a\in[0,1-1/e]$, $b>1$, $c\in[0,1]$, and $d\in[0,1]$, chosen such that
\begin{equation}\label{eqn_acd}
ac+d>1\text{.}
\end{equation}
These are absolute constants independent of $n$ and the distributions of the random variables, and their values will be fixed later. Let $T$ be such that $\int_T^{\infty}\Pr[X\geq x]dx=d$, or equivalently, $\int_0^T\Pr[X\geq x]dx=1-d$, because we assumed that $\mathbb{E}[X]=\int_0^{\infty}\Pr[X\geq x]dx=1$. Since $\Pr[X\geq x]\leq 1$, this implies the following lower bound on $T$.
\begin{equation}\label{eqn_T_lb}
T\geq\int_0^T\Pr[X\geq x]dx=1-d\text{.}
\end{equation}
The thresholds $\alpha_1,\ldots,\alpha_n$ of the algorithm are determined as follows.
\begin{itemize}
\item \textbf{Case 1:} If $\int_0^a\Pr[X\geq x]dx\leq c\cdot a$, then
\[\alpha_k=\begin{cases}
1-e^{(k-1)/n-1} & \text{if }\frac{k-1}{n}>1+\ln(1-a)\text{,}\\
\frac{1}{c}-\left(\frac{1}{c}-a\right)\left(\frac{e^{(k-1)/n-1}}{(1-a)}\right)^c & \text{otherwise.}
\end{cases}\]
\item \textbf{Case 2:} else if $\sum_{i=1}^n\Pr[X_i\geq T]\geq b$, then
\[\alpha_k=\begin{cases}
(1-e^{b((k-1)/n-1)})\cdot\frac{b+d-bd}{b} & \text{if }\frac{k-1}{n}>1-\frac{1}{b}\ln\frac{b+d-bd}{d}\text{,}\\
1-d\cdot\left(\frac{b+d-bd}{d}\right)^{1/b}\cdot e^{(k-1)/n-1} & \text{otherwise.}
\end{cases}\]
\item \textbf{Case 3:} else $\alpha_1=\cdots=\alpha_n=T$.
\end{itemize}

Our analyses of Case 1 and Case 2 reuse some parts of the analysis by \cite{EsfandiariHLM_ESA15}, whereas the analysis of Case 3 uses novel approach. We present the analyses of the three cases in the upcoming subsections, and then choose our parameters to get the final bound on the competitive ratio.


\subsection{Analysis of Case 1}\label{subsec_1}

Suppose that the input set of distributions falls into Case 1, that is, $\int_0^a\Pr[X\geq x]dx\leq c\cdot a$. Using this upper bound on $\int_0^a\Pr[X\geq x]dx$, we get the following strengthened version of Proposition \ref{prop_infty}.

\begin{lemma}\label{lem_case1}
If $\frac{k-1}{n}\leq1+\ln(1-a)$ then $\int_{\alpha_k}^{\infty}\Pr[z_k\geq x]dx\geq\frac{\theta(k)}{n}\cdot(1-c\alpha_k)$.
\end{lemma}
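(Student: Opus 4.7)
The plan is to reduce the lemma to an integral inequality on the survival function and then extract it from the Case~1 hypothesis using monotonicity. Concretely, Proposition~\ref{prop_infty} already furnishes $\int_{\alpha_k}^{\infty}\Pr[z_k\geq x]dx\geq\frac{\theta(k)}{n}\bigl(1-\int_0^{\alpha_k}\Pr[X\geq x]dx\bigr)$, so the whole proof collapses to showing $\int_0^{\alpha_k}\Pr[X\geq x]dx\leq c\alpha_k$ under the given regime. This is strictly stronger than the trivial bound $\leq\alpha_k$ used in Proposition~\ref{prop_infty}, and it is precisely where the Case~1 hypothesis has to enter.

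First I would verify that $\alpha_k\geq a$ whenever $(k-1)/n\leq 1+\ln(1-a)$. At the boundary $(k-1)/n=1+\ln(1-a)$ the second branch of the Case~1 formula gives $e^{(k-1)/n-1}=1-a$ and hence $\alpha_k=1/c-(1/c-a)\cdot 1=a$; for strictly smaller $(k-1)/n$ the factor $\bigl(e^{(k-1)/n-1}/(1-a)\bigr)^{c}$ sits strictly below $1$, which pushes $\alpha_k$ strictly above $a$ (using $1/c-a\geq 0$, which holds because $a\leq 1-1/e<1\leq 1/c$). With this in hand the relevant integral splits cleanly at $a$.

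Next I would upgrade the Case~1 \emph{average} bound $\frac{1}{a}\int_0^a\Pr[X\geq x]dx\leq c$ into a \emph{pointwise} bound. Because the survival function $x\mapsto\Pr[X\geq x]$ is non-increasing, its average over $[0,a]$ dominates its value at the right endpoint, giving $\Pr[X\geq a]\leq c$; monotonicity then propagates this inequality to every $x\geq a$. Combining the Case~1 hypothesis on $[0,a]$ with the bound $\int_a^{\alpha_k}\Pr[X\geq x]dx\leq c(\alpha_k-a)$ that now follows, I obtain $\int_0^{\alpha_k}\Pr[X\geq x]dx\leq ca+c(\alpha_k-a)=c\alpha_k$. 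Substituting this back into Proposition~\ref{prop_infty} closes the argument. I do not anticipate a real obstacle: the single conceptual move is the average-to-pointwise step for a monotone function, and everything else is routine algebra plus the boundary check on the threshold formula.
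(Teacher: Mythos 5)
Your proof is correct, and the reduction is set up exactly as in the paper: invoke Proposition~\ref{prop_infty}, check $\alpha_k\geq a$ from the Case~1 threshold formula, and then show the key bound $\int_0^{\alpha_k}\Pr[X\geq x]\,dx\leq c\alpha_k$. Where you diverge is in how that bound is derived. The paper uses a single rescaling step: since $a\leq\alpha_k$, one has $ax/\alpha_k\leq x$ and hence $\Pr[X\geq x]\leq\Pr[X\geq ax/\alpha_k]$ on $[0,\alpha_k]$; the substitution $y=ax/\alpha_k$ then gives $\int_0^{\alpha_k}\Pr[X\geq x]\,dx\leq\frac{\alpha_k}{a}\int_0^a\Pr[X\geq y]\,dy\leq c\alpha_k$ in one line. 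You instead split the integral at $a$, deduce the pointwise bound $\Pr[X\geq a]\leq c$ from the average bound plus monotonicity, and then bound the tail piece $\int_a^{\alpha_k}\Pr[X\geq x]\,dx\leq c(\alpha_k-a)$. Both are valid and elementary; the paper's substitution compresses your two monotonicity steps into a single comparison and never needs $\Pr[X\geq a]\leq c$ explicitly, while your split-at-$a$ argument makes the mechanism more transparent (past $a$ the survival function is already at most $c$, so the Case~1 hypothesis extends trivially). Your explicit verification that $\alpha_k\geq a$, including the observation that $1/c-a\geq 0$, matches the paper's claim but spells out what the paper merely asserts as ``clear.''
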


\begin{proof}
From the later part of Proposition \ref{prop_infty}, we have,
\begin{equation}\label{eqn_reuse1}
\int_{\alpha_k}^{\infty}\Pr[z_k\geq x]dx\geq\frac{\theta(k)}{n}\left(1-\int_0^{\alpha_k}\Pr[X\geq x]dx\right)\text{.}
\end{equation} 
Suppose $(k-1)/n\leq1+\ln(1-a)$. From the definition of $\alpha_k$ in Case 1 of the algorithm, it is clear that $\alpha_k\geq a$. Thus, for any $x\geq0$, $ax/\alpha_k\leq x$, and hence, $\Pr[X\geq x]\leq\Pr[X\geq ax/\alpha_k]$. Therefore,
\[\int_0^{\alpha_k}\Pr[X\geq x]dx\leq\int_0^{\alpha_k}\Pr\left[X\geq\frac{ax}{\alpha_k}\right]dx=\frac{\alpha_k}{a}\int_0^a\Pr[X\geq y]dy\leq\frac{\alpha_k}{a}\cdot ca=c\alpha_k\]
where we get the first equality with the substitution $y=ax/\alpha_k$ and the second last inequality follows from the assumption that we apply Case 1. Substituting this upper bound on $\int_0^{\alpha_k}\Pr[X\geq x]dx$ in (\ref{eqn_reuse1}), the claim follows.
\end{proof}

Using Lemma \ref{lem_case1} to bound $\int_{\alpha_k}^{\infty}\Pr[z_k\geq x]dx$ for positions $k$ such that $(k-1)/n\leq1+\ln(1-a)$, and Proposition \ref{prop_infty} for positions $k$ such that $(k-1)/n>1+\ln(1-a)$, we get a stronger lower bound on $\sum_{k=1}^n\int_{\alpha_k}^{\infty}\Pr[z_k\geq x]dx$ than what is obtained by using Proposition \ref{prop_infty} alone. Adding to this the bound on $\sum_{k=1}^n\int_0^{\alpha_k}\Pr[z_k\geq x]dx$ given by Proposition \ref{prop_0}, we get a lower bound on $\alg$ that depends on the $\theta(k)$'s. Our choice of the values of $\alpha_k$'s reduces this dependence to an $O(n^{-1})$ error term, and we get the following bound on the competitive ratio, whose proof is deferred to Appendix \ref{app_1}.

\begin{theorem}\label{thm_case1}
Suppose that the input set of distributions falls into Case 1 of the algorithm. Then the algorithm's expected profit is at least $\frac{1}{c}-\left(\frac{1}{c}-a\right)(e(1-a))^{-c}-\frac{\gamma_1}{n}$, for an absolute constant $\gamma_1$.
\end{theorem}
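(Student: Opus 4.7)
The plan is to substitute the Case~1 thresholds into the decomposition (\ref{eqn_alg_esa}) of $\alg$, apply Proposition~\ref{prop_0} to the first sum (as an equality), and bound the second sum piecewise: use Lemma~\ref{lem_case1} for indices in $L := \{k : (k-1)/n \leq 1+\ln(1-a)\}$ and Proposition~\ref{prop_infty} for indices in $H := \{k : (k-1)/n > 1+\ln(1-a)\}$. Combining these pieces gives
\[
\alg \;\geq\; \alpha_1 \;+\; \sum_{k\in L}\theta(k)\left[\frac{1-c\alpha_k}{n} - (\alpha_k - \alpha_{k+1})\right] \;+\; \sum_{k\in H}\theta(k)\left[\frac{1-\alpha_k}{n} - (\alpha_k - \alpha_{k+1})\right].
\]
The goal then reduces to showing that each bracketed coefficient is a small error term and that the entire $\theta(k)$-dependent sum contributes only $-O(1/n)$.

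The Case~1 thresholds have been tailored precisely for this cancellation. For $k \in L$, writing $\beta_k := (1/c - a)\bigl(e^{(k-1)/n - 1}/(1-a)\bigr)^c$, so that $\alpha_k = 1/c - \beta_k$, a direct computation gives $1 - c\alpha_k = c\beta_k$ and $\alpha_k - \alpha_{k+1} = \beta_k(e^{c/n}-1)$; thus the $L$-bracket equals $\beta_k\bigl[c/n - (e^{c/n}-1)\bigr]$, a non-positive second-order Taylor remainder of size $-\Theta(\beta_k/n^2)$. For $k \in H$, $1 - \alpha_k = e^{(k-1)/n-1}$ and $\alpha_k - \alpha_{k+1} = e^{(k-1)/n-1}(e^{1/n}-1)$, producing an $H$-bracket $e^{(k-1)/n-1}\bigl[1/n - (e^{1/n}-1)\bigr] = -\Theta(e^{(k-1)/n-1}/n^2)$. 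Both $\beta_k$ and $e^{(k-1)/n-1}$ are bounded by absolute constants on their respective ranges (in $L$, $e^{(k-1)/n-1}\le 1-a$; in $H$, $e^{(k-1)/n-1}\le 1$), so each bracket is $-O(1/n^2)$.

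Since every bracket is non-positive and $\theta(k)\in[0,1]$, I replace each $\theta(k)$ by $1$ to lower-bound the sum; the $n$ contributions of size $-O(1/n^2)$ then total at least $-\gamma_1/n$ for an absolute constant $\gamma_1$. To finish, I verify that $k=1$ lies in $L$ (which holds whenever $a\leq 1-1/e$, since then $1+\ln(1-a)\geq 0$) and evaluate $\alpha_1 = \tfrac{1}{c} - \bigl(\tfrac{1}{c} - a\bigr)(e(1-a))^{-c}$, matching the claimed bound. The main technical care is the bookkeeping of these second-order Taylor remainders; a useful sanity check beforehand is that the piecewise definition of $\alpha_k$ is continuous at the seam $(k-1)/n = 1+\ln(1-a)$, where both branches evaluate to $a$, so that a single step $\alpha_k-\alpha_{k+1}$ straddling the seam still behaves like the appropriate derivative up to $O(1/n^2)$ and does not inject an extra constant-sized error.
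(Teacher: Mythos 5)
Your decomposition and bookkeeping follow the paper's strategy exactly: apply Proposition~\ref{prop_0} to the first integral, split the second sum at the seam index, use Lemma~\ref{lem_case1} on the low indices and Proposition~\ref{prop_infty} on the high ones, push the $\theta(k)$'s to $1$, and sum the second-order Taylor remainders. Your per-index computations of the two bracketed coefficients are correct for every index strictly inside each regime, and your identification of $\alpha_1$ with the claimed expression is right. The paper carries out exactly the same plan via Lemma~\ref{lem_asymptotic1}.

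However, your treatment of the single seam index $\tau = \lfloor n(1+\ln(1-a))+1\rfloor$ is not adequately justified. The formula $\alpha_k - \alpha_{k+1} = \beta_k(e^{c/n}-1)$ requires \emph{both} $\alpha_k$ and $\alpha_{k+1}$ to come from the second branch, but $\alpha_{\tau+1}$ is defined by the first branch, so at $k=\tau$ that computation is simply not the bracket you are bounding. Your appeal to continuity at the seam is insufficient: the two branches agree at the seam point but have different one-sided derivatives there ($-(1-a)$ for the first branch versus $-(1-ac)$ for the second, which differ by $a(1-c)$ when $c<1$). A derivative kink can make a one-step difference across the seam deviate from the smooth step by $\Theta(1/n)$, not $O(1/n^2)$, so ``continuity $\Rightarrow$ $O(1/n^2)$'' is false in general. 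As it happens, the kink here goes in the favorable direction --- the first branch decreases more slowly past the seam than the second would, so the actual $\alpha_{\tau+1}$ exceeds the second-branch extrapolation, the step $\alpha_\tau - \alpha_{\tau+1}$ shrinks, and the bracket only increases --- but this sign argument needs to be made explicitly; it does not follow from continuity. The paper instead avoids this altogether by treating the seam term separately and proving only the weaker bound $-\gamma_1^{=}/n$ for $k=\tau$ (Lemma~\ref{lem_asymptotic1}, item~2), which is all one needs: a single $O(1/n)$ term plus $n-1$ terms of size $O(1/n^2)$ still totals $O(1/n)$. Either fix the seam index with the explicit sign comparison, or isolate it as the paper does and allow it an $O(1/n)$ budget.
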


\subsection{Analysis of Case 2}\label{subsec_2}

Suppose that the input set of distributions falls into Case 2, that is, $\sum_{i=1}^n\Pr[X_i\geq T]\geq b$. Then $\sum_{i=1}^n\Pr[X_i\geq x]\geq b$ for all $x\leq T$, because each $\Pr[X_i\geq x]$ can only increase as $x$ decreases. This gives us the following strengthened version of Proposition \ref{prop_infty}.

\begin{lemma}\label{lem_case2}
If $\frac{k-1}{n}\geq1-\frac{1}{b}\ln\frac{b+d-bd}{d}$ then $\int_{\alpha_k}^{\infty}\Pr[z_k\geq x]dx\geq\frac{\theta(k)}{n}\cdot(b+d-bd-b\alpha_k)$.
\end{lemma}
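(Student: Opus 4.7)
My plan is to integrate a strengthened pointwise lower bound on $\Pr[z_k\geq x]$ than the one isolated in Proposition \ref{prop_infty}. Esfandiari et al.'s derivation of Proposition \ref{prop_infty} naturally yields
\[ \Pr[z_k\geq x]\geq\frac{\theta(k)}{n}\sum_{i=1}^n\Pr[X_i\geq x]\quad\text{for } x\geq\alpha_k, \]
and then weakens it via the union bound $\sum_i\Pr[X_i\geq x]\geq\Pr[X\geq x]$. For Case 2 I integrate the sharper form, since its right-hand side contains precisely the quantity constrained by the Case 2 hypothesis $\sum_i\Pr[X_i\geq T]\geq b$.

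The lemma's hypothesis $(k-1)/n\geq 1-\frac{1}{b}\ln\frac{b+d-bd}{d}$ places us in the first sub-regime of the Case 2 threshold definition, where $\alpha_k=(1-e^{b((k-1)/n-1)})\cdot\frac{b+d-bd}{b}$; a direct check shows this quantity is monotonically decreasing in $k$, taking the value $1-d$ at the lower boundary and $0$ at $(k-1)/n=1$. Hence $\alpha_k\leq 1-d\leq T$ by (\ref{eqn_T_lb}), so I can split the integration at $T$. By monotonicity of tails the Case 2 hypothesis gives $\sum_i\Pr[X_i\geq x]\geq b$ for all $x\leq T$; for $x\geq T$ I use $\sum_i\Pr[X_i\geq x]\geq\Pr[X\geq x]$ and the defining identity $\int_T^{\infty}\Pr[X\geq x]\,dx=d$. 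Combining,
\begin{align*}
\int_{\alpha_k}^{\infty}\Pr[z_k\geq x]\,dx
&\geq \frac{\theta(k)}{n}\int_{\alpha_k}^{\infty}\sum_{i=1}^n\Pr[X_i\geq x]\,dx \\
&\geq \frac{\theta(k)}{n}\bigl(b(T-\alpha_k)+d\bigr).
\end{align*}
Invoking $T\geq 1-d$ finally gives $b(T-\alpha_k)+d\geq b(1-d-\alpha_k)+d=b+d-bd-b\alpha_k$, as claimed.

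The main obstacle is justifying the sharper pointwise inequality, since Proposition \ref{prop_infty} is only recorded in its union-bound form. The route I have in mind is to re-derive it from the conditional-independence identity $\Pr[z_k\geq x]=\frac{1}{n}\sum_i\Pr[\text{reach round }k\mid\sigma(k){=}i]\,\Pr[X_i\geq x]$ combined with the inequality $\Pr[\text{reach round }k\mid\sigma(k){=}i]\geq\theta(k)$; the latter is intuitive because conditioning on $\sigma(k)=i$ removes $X_i$ from the first $k-1$ positions, which can only reduce the chance of triggering a pick, but making this rigorous requires a careful coupling since the conditioning also shrinks the pool size from $n$ to $n-1$.
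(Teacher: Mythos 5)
Your proof matches the paper's argument essentially step for step: observe $\alpha_k\leq 1-d\leq T$, split the integral at $T$, use $\sum_i\Pr[X_i\geq x]\geq b$ on $[\alpha_k,T]$ and the defining identity $\int_T^{\infty}\Pr[X\geq x]\,dx=d$ on $[T,\infty)$, then finish with $T\geq 1-d$. The sharper pointwise bound $\Pr[z_k\geq x]\geq\frac{\theta(k)}{n}\sum_i\Pr[X_i\geq x]$ that you flag as the ``main obstacle'' is precisely what the paper invokes as Equation~(\ref{eqn_reuse2}), lifted from the proof of Lemma~11 of Esfandiari et al.\ rather than re-derived; your instinct that the underlying inequality $q_{-i}(k-1)\geq\theta(k)$ needs a careful coupling (which in turn relies on the monotonicity of the threshold sequence) is correct, but the paper treats that as an established fact from the cited work.
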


\begin{proof}
Suppose $\frac{k-1}{n}\geq1-\frac{1}{b}\ln\frac{b+d-bd}{d}$. From the definition of $\alpha_k$ in Case 2 of the algorithm, it is clear that $\alpha_k\leq 1-d$. Thus, from (\ref{eqn_T_lb}), we have $\alpha_k\leq 1-d\leq T$. For $x\in[\alpha_k,T]$, we have
\begin{equation}\label{eqn_reuse2}
\Pr[z_k\geq x]\geq\frac{\theta(k)}{n}\sum_{i=1}^n\Pr[X_i\geq x]\geq\frac{\theta(k)}{n}\sum_{i=1}^n\Pr[X_i\geq T]\geq b\cdot\frac{\theta(k)}{n}\text{.}
\end{equation} 
Thus,
\begin{eqnarray*}
\int_{\alpha_k}^{\infty}\Pr[z_k\geq x]dx & = & \int_{\alpha_k}^T\Pr[z_k\geq x]dx+\int_T^{\infty}\Pr[z_k\geq x]dx\\
 & \geq & \int_{\alpha_k}^Tb\cdot\frac{\theta(k)}{n}dx+\int_T^{\infty}\frac{\theta(k)}{n}\Pr[X\geq x]dx\\
 & \geq & \frac{\theta(k)}{n}\cdot b\cdot(1-d-\alpha_k)+\frac{\theta(k)}{n}\cdot d\\
 & = & \frac{\theta(k)}{n}\cdot(b+d-bd-b\alpha_k)\text{.}
\end{eqnarray*}
Here, in the first inequality, we used Equation (\ref{eqn_reuse2}) for the first term and the earlier part of Proposition \ref{prop_infty} for the second term. In the second inequality, we used the lower bound (\ref{eqn_T_lb}) on $T$ for the first term and the definition of $T$ for the second term.
\end{proof}

Analogous to Case 1, using Lemma \ref{lem_case2} to bound $\int_{\alpha_k}^{\infty}\Pr[z_k\geq x]dx$ for positions $k$ such that $(k-1)/n\geq1-\frac{1}{b}\ln\frac{b+d-bd}{d}$, and Proposition \ref{prop_infty} for positions $k$ such that $(k-1)/n<1-\frac{1}{b}\ln\frac{b+d-bd}{d}$, we get better bound on $\sum_{k=1}^n\int_{\alpha_k}^{\infty}\Pr[z_k\geq x]dx$. Using this bound and Proposition \ref{prop_0}, we again get a bound $\alg$ which depends on the $\theta(k)$'s. As before, the choice of $\alpha_k$'s reduces this dependence to an $O(n^{-1})$ error term, and we get the following bound on the competitive ratio, whose proof is deferred to Appendix \ref{app_2}

\begin{theorem}\label{thm_case2}
Suppose that the input set of distributions falls into Case 2 of the algorithm. Then the algorithm's expected profit is at least $1-\frac{d}{e}\cdot\left(\frac{b+d-bd}{d}\right)^{1/b}-\frac{\gamma_2}{n}$, for an absolute constant $\gamma_2$.
\end{theorem}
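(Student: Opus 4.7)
The plan is to mirror the Esfandiari et al.\ calculation outlined in Section~\ref{sec_old}, but apply the sharper Lemma~\ref{lem_case2} wherever it is valid. Set $k_0 = 1 - \tfrac{1}{b}\ln\tfrac{b+d-bd}{d}$, and let $k^\ast$ be the smallest index with $(k^\ast-1)/n > k_0$. For $k < k^\ast$, Proposition~\ref{prop_infty} gives $\int_{\alpha_k}^\infty \Pr[z_k \geq x]\,dx \geq \tfrac{\theta(k)}{n}(1 - \alpha_k)$; for $k \geq k^\ast$, Lemma~\ref{lem_case2} gives $\int_{\alpha_k}^\infty \Pr[z_k \geq x]\,dx \geq \tfrac{\theta(k)}{n}(b+d-bd-b\alpha_k)$. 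Writing $g(k)$ for the resulting coefficient of $\theta(k)/n$ in either regime and combining with Proposition~\ref{prop_0} in~(\ref{eqn_alg_esa}), we obtain
\[
\alg \;\geq\; \alpha_1 \;-\; \sum_{k=1}^n \theta(k)\left[(\alpha_k - \alpha_{k+1}) - \tfrac{g(k)}{n}\right].
\]

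The core of the argument is to check that the bracketed expression is $O(1/n^2)$ for every $k$. The closed-form $\alpha_k$ is precisely the discrete sample, at $t_k = (k-1)/n$, of the ODE solution of $-\alpha'(t) = g(t)$ in each region. In the ``otherwise'' branch, $\alpha_k = 1 - d e^{t_k - k_0}$ gives $\alpha_k - \alpha_{k+1} = (1-\alpha_k)(e^{1/n} - 1)$ while $g(k)/n = (1-\alpha_k)/n$, so the difference is $(1-\alpha_k)[e^{1/n} - 1 - 1/n] = O(1/n^2)$. In the ``if'' branch, $\alpha_k = B(1 - e^{b(t_k - 1)})$ with $B = (b+d-bd)/b$ gives $\alpha_k - \alpha_{k+1} = (B-\alpha_k)(e^{b/n} - 1)$ and $g(k)/n = b(B-\alpha_k)/n$, again with difference $O(1/n^2)$. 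At the unique straddling index where $t_k \leq k_0 < t_{k+1}$, both formulas agree at $t = k_0$ (both giving $\alpha = 1-d$ and $g = d$ by construction), so Taylor-expanding each formula to second order around $t = k_0$ again produces an $O(1/n^2)$ discrepancy. Since $\theta(k) \in [0,1]$, the total error is $O(1/n)$, giving $\alg \geq \alpha_1 - \gamma_2/n$. Plugging $k = 1$ into the ``otherwise'' formula yields $\alpha_1 = 1 - \tfrac{d}{e}\bigl(\tfrac{b+d-bd}{d}\bigr)^{1/b}$, as claimed. (If $k_0 < 0$ so that $k = 1$ already falls in the ``if'' branch, applying the analysis to only that branch gives $\alg \geq B(1 - e^{-b}) - \gamma_2/n \geq 1 - d - \gamma_2/n$, which still exceeds the stated bound.)

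The main obstacle is the transitional bookkeeping at the straddling index, where $\alpha_k$ and $\alpha_{k+1}$ are defined by different formulas. The specific value $k_0 = 1 - \tfrac{1}{b}\ln\tfrac{b+d-bd}{d}$ is precisely tuned to make both $\alpha$ and $g$ continuous at $t = k_0$, which is what lets the Taylor-expansion argument yield an $O(1/n^2)$ discrepancy at that single index too, so that the overall error collapses into the $\gamma_2/n$ term.
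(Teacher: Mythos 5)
Your proposal is correct and follows essentially the same approach as the paper (Appendix B.2). You split the sum at the same index, apply Proposition~\ref{prop_infty} on one side and Lemma~\ref{lem_case2} on the other, and bound the per-index error in $\alg \geq \alpha_1 - \sum_k \theta(k)[(\alpha_k-\alpha_{k+1}) - g(k)/n]$; the paper does exactly this in its Lemma~\ref{lem_asymptotic2} and then the proof of Theorem~\ref{thm_case2}. The only differences are presentational and one minor refinement: the paper computes explicit constants $\gamma_2^<, \gamma_2^=, \gamma_2^>$ by direct manipulation, while you phrase the same bounds through the ODE/Taylor viewpoint ($\alpha_k$ as discrete samples of $-\alpha'(t) = g(\alpha(t))$). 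At the straddling index, the paper is content with an $O(1/n)$ bound (Lemma~\ref{lem_asymptotic2}, item 2), whereas you observe that continuity of both $\alpha$ and $g$ at $t = k_0$, together with matching first derivatives, actually yields $O(1/n^2)$ there too; this is a correct sharpening, though both suffice since there is only one such index and the overall error is $O(1/n)$ either way. Your edge-case remark about $k_0 < 0$ is also fine, though it does not arise for the parameters the paper ultimately chooses.
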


\subsection{Analysis of Case 3}\label{subsec_3}

Suppose the set $\{X_1,\ldots,X_n\}$ of random variables falls into Case 3 of the algorithm. Then the algorithm chooses $\alpha_1=\cdots=\alpha_n=T$, where $\int_T^{\infty}\Pr[X\geq x]dx=d$ and $\int_0^T\Pr[X\geq x]dx=1-d$. Consider the event that the algorithm does not pick any of the samples, and recall that its probability is $\theta(n)$, by definition. This event happens if and only if all the samples are less than $T$, or equivalently, the maximum of the samples is less than $T$. Our first lemma states that this event is not too likely.

\begin{lemma}\label{lem_thetan}
Define $h=\frac{ca-1+d}{a-1+d}$. Then the algorithm chooses some sample with probability at least $h$, that is,
\[1-\theta(n)=\Pr[\exists i\text{ }X_i\geq T]=\Pr[X\geq T]>h\text{.}\]
\end{lemma}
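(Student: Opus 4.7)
The two equalities in the lemma are essentially bookkeeping: in Case 3 every threshold equals $T$, so the algorithm picks some sample precisely when $\max_k x_{\sigma(k)} \geq T$, which, since $x_i \sim X_i$, is the event $\{X \geq T\} = \{\exists i: X_i \geq T\}$. The real content is the strict inequality $\Pr[X \geq T] > h$. My plan is to use only the failure of Case 1 together with the parameter constraint $ac+d>1$ from (\ref{eqn_acd}); interestingly, the failure of Case 2 does not appear to be needed for this particular lemma and must come into play elsewhere in the Case 3 analysis.

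The first step is to locate $T$ relative to $a$. The failure of Case 1 gives $\int_0^a \Pr[X \geq x]\,dx > ca$, while (\ref{eqn_acd}) combined with the defining identity $\int_0^T \Pr[X \geq x]\,dx = 1-d$ gives $ca > 1-d = \int_0^T \Pr[X \geq x]\,dx$. Since $u \mapsto \int_0^u \Pr[X \geq x]\,dx$ is non-decreasing, these two inequalities force $a > T$. Subtracting the $\int_0^T$ identity from the $\int_0^a$ inequality then yields the strictly positive lower bound $\int_T^a \Pr[X \geq x]\,dx > ca - 1 + d$.

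The second step converts this into a bound on $p := \Pr[X \geq T]$. Monotonicity of the survival function on $[T, a]$ gives $\Pr[X \geq x] \leq p$, and hence $(a-T)\,p \geq \int_T^a \Pr[X \geq x]\,dx > ca - 1 + d$. Dividing by $a - T > 0$ and then using the standing lower bound $T \geq 1 - d$ from (\ref{eqn_T_lb}) to replace $a - T$ by the possibly larger $a - 1 + d$ in the denominator (legitimate because the numerator is positive) yields $p > (ca - 1 + d)/(a - 1 + d) = h$. There is no serious technical obstacle here; the only care required is sign bookkeeping, since both the numerator and the denominator of $h$ are positive precisely because of (\ref{eqn_acd}) and (\ref{eqn_T_lb}) respectively, and that positivity is exactly what allows each inequality in the chain to go through in the intended direction.
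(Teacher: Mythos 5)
Your proof is correct and follows essentially the same route as the paper's: use the failure of Case~1 to get $\int_0^a \Pr[X\geq x]\,dx > ca$, combine with $ac+d>1$ and the definition of $T$ to conclude $T<a$, then bound $\int_T^a \Pr[X\geq x]\,dx$ by $(a-T)\Pr[X\geq T]$ and finally enlarge the denominator via $T\geq 1-d$ from (\ref{eqn_T_lb}). Your observation that failure of Case~2 is not needed here is also accurate; the paper likewise never invokes it in this lemma.
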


\begin{proof}
Since we are in Case 3, we have $\int_0^a\Pr[X\geq x]dx>ca$. Recall Equation (\ref{eqn_acd}), which stated that we choose $a$, $c$, and $d$ such that $ac+d>1$. Thus,
\[\int_0^T\Pr[X\geq x]dx=1-d<ac<\int_0^a\Pr[X\geq x]dx\text{.}\]
This implies $T<a$. For any $x\in[T,a]$, $\Pr[X\geq x]\leq\Pr[X\geq T]$. Thus,
\[1-d+(a-T)\Pr[X\geq T]\geq\int_{0}^T\Pr[X\geq x]dx+\int_T^a\Pr[X\geq x]dx=\int_0^a\Pr[X\geq x]dx>ca\text{.}\]
Therefore, 
\[1-\theta(n)=\Pr[X\geq T]>\frac{ca-1+d}{a-T}\geq\frac{ca-1+d}{a-1+d}=h\]
where we used the lower bound on $T$ given by (\ref{eqn_T_lb}) for the second inequality.
\end{proof}

Next, we bound $\sum_{k=1}^n\int_0^{\alpha_k}\Pr[z_k\geq x]dx$ and $\sum_{k=1}^n\int_{\alpha_k}^{\infty}\Pr[z_k\geq x]dx$ from below. Substituting these lower bounds in Equation (\ref{eqn_alg_esa}), we get a lower bound on the algorithm's profit. From Proposition \ref{prop_0} and recalling that $\alpha_{n+1}=0$, we have,
\begin{equation}\label{eqn_term1_lb}
\sum_{k=1}^n\int_0^{\alpha_k}\Pr[z_k\geq x]dx=\alpha_1-\sum_{k=1}^n\theta(k)(\alpha_k-\alpha_{k+1})=T-\theta(n)\cdot T=(1-\theta(n))T\geq h(1-d)
\end{equation}
where we used Lemma \ref{lem_thetan} and Equation (\ref{eqn_T_lb}) for the last inequality. 

Reusing some notation and arguments from Esfandiari et al.\ \cite{EsfandiariHLM_ESA15}, let $q_{-i}(k)$ denote the probability that the algorithm rejects the first $k$ samples, given that none of them came from $X_i$. Then $\Pr[z_k\geq x]=\sum_{i=1}^n\Pr[\sigma(k)=i]\cdot\Pr[z_k\geq x\text{ }|\text{ }\sigma(k)=i]=\frac{1}{n}\sum_{i=1}^n\Pr[z_k\geq x\text{ }|\text{ }\sigma(k)=i]$. Suppose $x\geq\alpha_k=T$. Conditioned on $\sigma(k)=i$, the event $z_k\geq x$ happens if and only if the following two independent events happen: the algorithm rejects the first $k-1$ samples, and $X_i\geq x$. Thus, $\Pr[z_k\geq x\text{ }|\text{ }\sigma(k)=i]=q_{-i}(k-1)\cdot\Pr[X_i\geq x]$, and consequently, $\Pr[z_k\geq x]=\frac{1}{n}\sum_{i=1}^nq_{-i}(k-1)\cdot\Pr[X_i\geq x]$. Therefore,
\[\sum_{k=1}^n\int_{x=\alpha_k}^{\infty}\Pr[z_k\geq x]dx=\sum_{k=1}^n\int_{x=T}^{\infty}\frac{1}{n}\sum_{i=1}^nq_{-i}(k-1)\cdot\Pr[X_i\geq x]dx\text{.}\]
Interchanging the order of summations and integration, we get,
\[\sum_{k=1}^n\int_{x=\alpha_k}^{\infty}\Pr[z_k\geq x]dx=\sum_{i=1}^n\left(\frac{1}{n}\sum_{k=1}^nq_{-i}(k-1)\right)\cdot\left(\int_{x=T}^{\infty}\Pr[X_i\geq x]dx\right)\text{.}\]
Define $\mu_i=\int_{x=T}^{\infty}\Pr[X_i\geq x]dx$, the quantity in the second pair of parentheses above. Thus,
\[\sum_{k=1}^n\int_{x=\alpha_k}^{\infty}\Pr[z_k\geq x]dx=\sum_{i=1}^n\left(\frac{1}{n}\sum_{k=1}^nq_{-i}(k-1)\right)\cdot\mu_i\text{.}\]
Define $E_i$ to be the event that the algorithm encounters $X_i$, that is, it does not choose a sample before it sees $x_i$. Then observe that $\Pr[E_i|\sigma(k)=i]=q_{-i}(k-1)$, and thus, $\Pr[E_i]=\sum_{k=1}^n\Pr[\sigma(k)=i]\cdot\Pr[E_i|\sigma(k)=i]=\frac{1}{n}\sum_{k=1}^nq_{-i}(k-1)$, the expression in the parentheses above. Therefore,
\begin{equation}\label{eqn_term2_3}
\sum_{k=1}^n\int_{x=\alpha_k}^{\infty}\Pr[z_k\geq x]dx=\sum_{i=1}^n\Pr[E_i]\cdot\mu_i\text{.}
\end{equation}

In order to lower bound the above, we need a crucial lemma, which states that there is one \textit{prominent} random variable among $\{X_1,\ldots,X_n\}$ which is larger than $T$ with a large probability, whereas the others are unlikely to exceed $T$. For the rest of this section, we assume, without loss of generality, that $\Pr[X_1\geq T]=\max_i\Pr[X_i\geq T]$ (that is, the \textit{prominent} random variable is $X_1$).

\begin{lemma}\label{lem_x1}
Define $g\in(0,1]$ to be the unique\footnote{Observe that $\zeta:(0,1]\longrightarrow[0,1/e)$ defined as $\zeta(z)=(1-z)^{1/z}$ is a monotonically decreasing function with $\zeta(1)=0$ and $\lim_{z\rightarrow 0}\zeta(z)=1/e$. We will ensure that $0\leq(1-h)^{1/b}<1/e$, so that $g$ exists.} number such that $(1-g)^{1/g}=(1-h)^{1/b}$. Then $\Pr[X_1\geq T]\geq g$ and $\sum_{i=2}^n\Pr[X_i\geq T]\leq b-g$.
\end{lemma}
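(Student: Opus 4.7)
The plan is to work with the probabilities $p_i := \Pr[X_i \geq T]$, noting that by independence $\Pr[X \geq T] = 1 - \prod_{i=1}^n(1-p_i)$. From Lemma \ref{lem_thetan}, we already have $\prod_{i=1}^n(1-p_i) < 1 - h$, and from the fact that we are in Case 3 (not Case 2) we have $\sum_{i=1}^n p_i < b$. The two inequalities $\Pr[X_1 \geq T] \geq g$ and $\sum_{i=2}^n \Pr[X_i \geq T] \leq b - g$ will both follow once we establish the first, since then $\sum_{i \geq 2} p_i = \sum_i p_i - p_1 < b - g$. So the main task is the lower bound on $p_1$.

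To that end, I would prove the key pointwise inequality: for every $p \in [0,g]$,
\[1 - p \;\geq\; (1-g)^{p/g}.\]
This is immediate from the concavity of $\log(1-p)$ on $[0,1)$: the chord joining $(0,0)$ to $(g,\log(1-g))$ lies below the graph, so $\log(1-p) \geq (p/g)\log(1-g)$, which exponentiates to the claim. The point of this inequality is that it lets me turn a product into something controlled by $\sum p_i$ whenever all $p_i$ lie in $[0,g]$.

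The heart of the argument is a proof by contradiction. Suppose $p_1 < g$; since $p_1 = \max_i p_i$, every $p_i < g$. Applying the pointwise bound to each factor,
\[\prod_{i=1}^n (1-p_i) \;\geq\; \prod_{i=1}^n (1-g)^{p_i/g} \;=\; (1-g)^{(\sum_i p_i)/g} \;>\; (1-g)^{b/g},\]
where the last strict inequality uses $\sum_i p_i < b$ together with $0 < 1-g < 1$. By the defining equation $(1-g)^{1/g} = (1-h)^{1/b}$, raising both sides to the $b$th power gives $(1-g)^{b/g} = 1-h$. Hence $\prod_i(1-p_i) > 1 - h$, contradicting Lemma \ref{lem_thetan}. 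Therefore $p_1 \geq g$, and the second inequality follows as noted above.

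The only subtle point is the legitimacy of the definition of $g$: one must check that the footnote's hypothesis $0 \leq (1-h)^{1/b} < 1/e$ really holds for the final parameter choices, so that $g \in (0,1]$ exists and $g < 1$ (the latter being needed so that $(1-g)^{b/g}$ is a well-defined positive number below $1$). I expect this to be a mild bookkeeping issue once the numerical values of $a,b,c,d$ are fixed at the end of Section \ref{sec_alg}, rather than a real obstacle to the argument above.
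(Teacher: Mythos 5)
Your proof is correct and reaches the same contradiction as the paper, but it streamlines the key technical step. The paper's argument goes through a separate optimization lemma (Lemma~\ref{lem_optn}): it shows that $\prod_i(1-y_i)$, minimized over $0\le y_i\le p$ and $\sum_i y_i\le b$, is attained at an extremal point where $\lfloor b/p\rfloor$ coordinates equal $p$, one is fractional, and the rest vanish, via an exchange argument and a case split on $np\lessgtr b$ --- and then, at the very last line, invokes the inequality $1-xz\ge(1-x)^z$, which is exactly the concavity-of-$\log(1-\cdot)$ fact you use. You cut straight to that concavity inequality, apply it to each factor to get $\prod_i(1-p_i)\ge(1-g)^{(\sum_i p_i)/g}>(1-g)^{b/g}=1-h$, and contradict Lemma~\ref{lem_thetan} directly. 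This bypasses the extremal-point optimization entirely and is genuinely shorter; since Lemma~\ref{lem_optn} is used nowhere else in the paper, nothing is lost. The caveat you flag --- that one must confirm $g<1$ (equivalently $h<1$, equivalently $(1-h)^{1/b}>0$) for the strict inequality and for $(1-g)^{b/g}$ to behave as needed --- is indeed just a parameter check and is implicitly guaranteed by the paper's footnote and the numerical choices in Section~\ref{subsec_cr} (which give $h\approx0.925$).
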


The proof of Lemma \ref{lem_x1} relies on the following technical result.

\begin{lemma}\label{lem_optn}
Suppose $y_1,\ldots,y_n$ are such that $0\leq y_i\leq p$ for all $i$, and $\sum_{i=1}^ny_i\leq b$, where $p\in[0,1]$. Then we have $\prod_{i=1}^n(1-y_i)\geq(1-p)^{b/p}$.
\end{lemma}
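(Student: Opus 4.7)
The plan is to prove the inequality pointwise via concavity of $\log(1-\cdot)$, reducing the whole claim to a one-line chord argument rather than doing any structural analysis of extreme points of the feasible region.

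Concretely, I would take logarithms of the target inequality and first establish the pointwise bound
\[\log(1-y) \;\geq\; \frac{y}{p}\log(1-p) \qquad \text{for all } y \in [0,p].\]
This is immediate from concavity of $y \mapsto \log(1-y)$ on $[0,1)$: writing $y = (y/p)\cdot p + (1-y/p)\cdot 0$ as a convex combination of the endpoints $0$ and $p$ and applying concavity gives exactly the displayed inequality, since $\log 1 = 0$. Summing this over $i = 1, \ldots, n$ yields
\[\sum_{i=1}^n \log(1-y_i) \;\geq\; \frac{\log(1-p)}{p} \cdot \sum_{i=1}^n y_i.\]
Now $\log(1-p) \leq 0$, so $\log(1-p)/p \leq 0$. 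Multiplying the constraint $\sum_i y_i \leq b$ by this nonpositive factor reverses the inequality and bounds the right-hand side below by $(b/p)\log(1-p)$. Exponentiating gives $\prod_i (1-y_i) \geq (1-p)^{b/p}$, as desired.

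For the degenerate boundary cases, $p = 1$ makes the right-hand side $0$ and the inequality trivial, while $p = 0$ forces every $y_i = 0$ so the product equals $1$, and $(1-p)^{b/p}$ is read as $\lim_{p \to 0^+}(1-p)^{b/p} = e^{-b} \leq 1$. I do not foresee any substantial obstacle: the only place where one must be careful is the sign flip when multiplying the linear constraint $\sum_i y_i \leq b$ by the nonpositive quantity $\log(1-p)/p$. The concavity-plus-chord device completely bypasses what would otherwise be a tedious reduction to the case where all but one of the $y_i$ lie at the endpoints $\{0, p\}$ followed by an explicit per-variable comparison.
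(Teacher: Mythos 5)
Your proof is correct, and it takes a genuinely different and cleaner route than the paper's. The paper minimizes $\prod_i(1-y_i)$ by a local-exchange argument on the feasible polytope: it shows the minimizer has at most one coordinate strictly between $0$ and $p$, explicitly identifies the extremal configuration ($\lfloor b/p\rfloor$ coordinates equal to $p$, one residual coordinate, the rest zero), and then evaluates the product there using the elementary inequality $1-xz\geq(1-x)^z$. Your argument sidesteps all of that: the chord bound $\log(1-y)\geq(y/p)\log(1-p)$ on $[0,p]$ from concavity, summed and combined with $\sum_i y_i\leq b$ after the sign flip, gives the result in one pass with no case analysis on whether $np\leq b$ or $np>b$. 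What the paper's method buys is an explicit description of the worst-case $(y_1,\ldots,y_n)$, which can be useful if one wants to know when the bound is tight; what your method buys is brevity and robustness, since the concavity step is coordinate-free. One small remark: the statement technically needs $p>0$ for the exponent $b/p$ to make sense, which is implicit in both proofs (the paper divides by $p$ when writing $\lfloor b/p\rfloor$, and you divide by $p$ in the chord slope); your note on reading $p=0$ as a limit is a reasonable convention but is not needed where the lemma is invoked, since there $p=g>0$.
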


\begin{proof}
Let $(y_1,\ldots,y_n)$ minimize $\prod_{i=1}^n(1-y_i)$ subject to the constraints $0\leq y_i\leq p$ for all $i$, and $\sum_{i=1}^ny_i\leq b$. If $np\leq b$, then it is easy to see that $y_1=\ldots=y_n=p$ is the optimum. Then $\prod_{i=1}^n(1-y_i)=(1-p)^n\geq(1-p)^{b/p}$, because $1-p\in[0,1]$ and $n\leq b/p$. This implies the claim.

Now suppose $np>b$. Then some $y_i$ must be less than $p$. Further, $\sum_{i=1}^ny_i$ must be equal to $b$, otherwise we can increase one of the $y_i$s, resulting in a decrease in the objective value while maintaining feasibility. If we have $0<y_i\leq y_j<p$ for some $i\neq j$, then we may choose an appropriate $\varepsilon>0$ and replace $(y_i,y_j)$ by $(y_i-\varepsilon,y_j+\varepsilon)$, thereby decreasing $\prod_{i=1}^n(1-y_i)$ while still maintaining feasibility, and contradicting the optimality of $(y_1,\ldots,y_n)$. Thus, we must have at most one $y_i$ which is in $(0,p)$. This forces that $\lfloor b/p\rfloor$ many $y_i$s are $p$, one $y_i$ is $b-p\lfloor b/p\rfloor=p(b/p-\lfloor b/p\rfloor)$, and the rest are zero. Thus,
\[\prod_{i=1}^n(1-y_i)=(1-p)^{\lfloor b/p\rfloor}(1-p(b/p-\lfloor b/p\rfloor))\geq(1-p)^{\lfloor b/p\rfloor}(1-p)^{(b/p-\lfloor b/p\rfloor)}=(1-p)^{b/p}\]
where we used the identity $1-xz\geq(1-x)^z$ for $x,z\in[0,1]$ as long as $1-x$ and $z$ are not both zero.
\end{proof}

As a consequence, Lemma \ref{lem_x1} is proved as follows.

\begin{proof}[Proof of Lemma \ref{lem_x1}]
Suppose, for contradiction, that $\Pr[X_1\geq T]=\max_i\Pr[X_i\geq T]<g$. We have $\sum_{i=1}^n\Pr[X_i\geq T]<b$, otherwise the algorithm would execute Case 2 and not Case 3. Applying Lemma \ref{lem_optn} (with $y_i=\Pr[X_i\geq T]$), we have
\[\Pr[\exists i\text{ }X_i\geq T]=1-\prod_{i=1}^n(1-\Pr[X_i\geq T])\leq1-(1-g)^{b/g}=1-(1-h)=h\text{.}\]
This contradicts Lemma \ref{lem_thetan}. Thus, $\Pr[X_1\geq T]\geq g$, which also implies $\sum_{i=2}^n\Pr[X_i\geq T]\leq b-g$, because $\sum_{i=1}^n\Pr[X_i\geq T]<b$.
\end{proof}

Lemma \ref{lem_x1} enables us to prove Lemma \ref{lem_qi}, which gives a lower bound on $\Pr[E_i]$. Lemma \ref{lem_qi} states that the \textit{prominent} random variable $X_1$ is encountered with probability close to $1$, whereas each of the others is encountered with probability almost $1/2$. An intuitive explanation for this is the following. Lemma \ref{lem_x1} states that $X_2,\ldots,X_n$ are together unlikely to cause the algorithm to stop. Therefore, the algorithm must see $X_1$ with probability close to one. Moreover, each other $X_i$ appears before $X_1$ with probability $1/2$. Given that this happens, the algorithm is unlikely to stop before it sees $X_i$. Thus, $X_i$ is seen with probability close to $1/2$.

\begin{lemma}\label{lem_qi}
$\Pr[E_1]\geq1-\frac{b-g}{2}$, and for $i>1$, we have $\Pr[E_i]\geq\frac{1}{2}\left(1-\frac{b-g}{3}\right)$.
\end{lemma}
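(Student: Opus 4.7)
The proof splits naturally into two union-bound calculations, one for $i = 1$ and a sharper one for $i > 1$. Throughout, write $p_j = \Pr[X_j \geq T]$, so that Lemma \ref{lem_x1} gives $p_1 \geq g$ and $\sum_{j \geq 2} p_j \leq b - g$. Since all thresholds equal $T$, the event $\bar E_i$ is exactly the event that some index $j \neq i$ satisfies both ``$X_j$ appears before $X_i$ in $\sigma$'' and ``$x_j \geq T$''. Note that the permutation $\sigma$ and the samples $x_1, \ldots, x_n$ are independent, so the two events inside this conjunction are independent.

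\textbf{Bounding $\Pr[E_1]$.} Apply the union bound over $j \neq 1$:
\[
\Pr[\bar E_1] \;\leq\; \sum_{j=2}^n \Pr[X_j \text{ before } X_1 \text{ in } \sigma]\cdot \Pr[x_j \geq T] \;=\; \frac{1}{2}\sum_{j=2}^n p_j \;\leq\; \frac{b-g}{2},
\]
using that in a uniformly random permutation any two specific indices appear in either order with probability $1/2$. This gives $\Pr[E_1] \geq 1 - (b-g)/2$.

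\textbf{Bounding $\Pr[E_i]$ for $i > 1$.} The obvious union bound now costs us a $p_1/2$ term, which is too large. The key trick is to restrict attention to the subevent $B_i = \{X_i \text{ appears before } X_1 \text{ in } \sigma\}$, which has probability $1/2$. On this event, $X_1$ is automatically not a witness to $\bar E_i$, so
\[
\bar E_i \cap B_i \;\subseteq\; \bigcup_{j \notin \{1,i\}} \bigl\{X_j \text{ before } X_i \text{ before } X_1 \text{ in } \sigma\bigr\} \cap \{x_j \geq T\}.
\]
For any three distinct indices, each of the six relative orderings has probability $1/6$ in $\sigma$. So by the union bound,
\[
\Pr[\bar E_i \cap B_i] \;\leq\; \sum_{j \notin \{1,i\}} \frac{1}{6}\, p_j \;\leq\; \frac{b-g}{6},
\]
and therefore $\Pr[E_i] \geq \Pr[E_i \cap B_i] = \Pr[B_i] - \Pr[\bar E_i \cap B_i] \geq \frac{1}{2} - \frac{b-g}{6} = \frac{1}{2}\bigl(1 - \frac{b-g}{3}\bigr)$.

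The only non-routine step is the second one: recognizing that bounding $\Pr[E_i]$ directly by the union bound over $j \neq i$ incurs an unavoidable $p_1/2$ loss (because $X_1$ is the dominant obstruction), and that intersecting with $B_i$ both eliminates $X_1$ as a possible blocker and replaces the ``two-element ordering'' factor $1/2$ by the ``three-element ordering'' factor $1/6$ for every other $j$. The independence of $\sigma$ from the samples, which is what lets us multiply $\Pr[X_j \text{ before } X_i \text{ before } X_1]$ by $p_j = \Pr[x_j \geq T]$, is used implicitly in both steps.
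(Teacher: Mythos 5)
Your proof is correct and follows essentially the same route as the paper's: both bound $\Pr[E_1]$ by a union bound weighted by the probability $1/2$ that each $j\neq 1$ precedes $1$, and both handle $i>1$ by restricting to the event that $i$ precedes $1$ (probability $1/2$), which eliminates $X_1$ as a potential blocker and reduces the per-index ordering weight from $1/2$ to $1/3$ (or equivalently, the unconditional weight $1/6$ as you write it). The paper phrases this as conditioning on $1\notin S_i$ and then conditioning further on $S_i$, while you phrase it as $\Pr[E_i]\geq\Pr[B_i]-\Pr[\bar E_i\cap B_i]$; these are the same computation.
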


\begin{proof}
Let $S_i=\{j\in[n]\text{ }|\text{ }\sigma^{-1}(j)<\sigma^{-1}(i)\}\subseteq[n]\setminus{i}$ be the random subset of indices $j$ whose position is before $i$ in the random permutation $\sigma$. Then
\begin{equation}\label{eqn_Ei}
\Pr[E_i\text{ }|\text{ }S_i]=\Pr[X_j<T\text{ }\forall j\in S_i]\geq1-\sum_{j\in S_i}\Pr[X_j\geq T]=1-\sum_{j\neq i}\Pr[X_j\geq T]\cdot\mathbb{I}[j\in S_i]
\end{equation}
where the inequality follows by the union bound.

First, consider the case of $i=1$. By (\ref{eqn_Ei}), we have,
\[\Pr[E_1]\geq\mathbb{E}_{S_1}\left[1-\sum_{j=2}^n\Pr[X_j\geq T]\cdot\mathbb{I}[j\in S_1]\right]=1-\sum_{j=2}^n\Pr[X_j\geq T]\cdot\Pr[j\in S_1]\text{.}\]
Every $j\neq1$ is equally likely to be before $1$ and after $1$ in the random permutation. Thus, $\Pr[j\in S_1]=1/2$. Therefore,
\[\Pr[E_1]\geq1-\frac{1}{2}\sum_{j=2}^n\Pr[X_j\geq T]\geq1-\frac{b-g}{2}\]
where we used Lemma \ref{lem_x1} for the second inequality.

Next, let $i>1$. Again, by (\ref{eqn_Ei}), we have,
\begin{eqnarray*}
\Pr[E_i\text{ }|\text{ }1\notin S_i] & \geq & \mathbb{E}_{S_i}\left[1-\sum_{j\neq i}\Pr[X_j\geq T]\cdot\mathbb{I}[j\in S_i]\text{ }\arrowvert\text{ }1\notin S_i\right]\\
 & = & 1-\sum_{j\notin\{1,i\}}\Pr[X_j\geq T]\cdot\Pr[j\in S_i\text{ }|\text{ }1\notin S_i]\text{.}
\end{eqnarray*}
Given $1\notin S_i$, that is, $1$ is after $i$, it is equally likely that $j\notin\{1,i\}$ is before $i$, between $i$ and $1$, and after $1$, in the random permutation. Thus, $\Pr[j\in S_i\text{ }|\text{ }1\notin S_i]=1/3$. Therefore,
\[\Pr[E_i\text{ }|\text{ }1\notin S_i]\geq1-\frac{1}{3}\sum_{j\notin\{1,i\}}\Pr[X_j\geq T]\geq1-\frac{1}{3}\sum_{j\neq1}\Pr[X_j\geq T]\geq1-\frac{b-g}{3}\]
where we used Lemma \ref{lem_x1} for the last inequality. Therefore,
\[\Pr[E_i]\geq\Pr[1\notin S_i]\cdot\Pr[E_i\text{ }|\text{ }1\notin S_i]\geq\frac{1}{2}\left(1-\frac{b-g}{3}\right)\]
as required.
\end{proof}

Substituting the bounds given by Lemma \ref{lem_qi} in Equation (\ref{eqn_term2_3}), we get,
\begin{eqnarray}\label{eqn_term2_lb}
\sum_{k=1}^n\int_{\alpha_k}^{\infty}\Pr[z_k\geq x]dx & \geq & \left(1-\frac{b-g}{2}\right)\cdot\mu_1+\frac{1}{2}\left(1-\frac{b-g}{3}\right)\cdot\sum_{i=2}^n\mu_i\nonumber\\
 & = & \left(\frac{1}{2}-\frac{b-g}{3}\right)\cdot\mu_1+\frac{1}{2}\left(1-\frac{b-g}{3}\right)\cdot\sum_{i=1}^n\mu_i\text{.}
\end{eqnarray}
We are thus left to bound $\mu_1$ and $\sum_{i=1}^n\mu_i$ from below.

\begin{lemma}\label{lem_mu1}
$\mu_1\geq ca-(1-d)-(a-1+d)(b-g)$.
\end{lemma}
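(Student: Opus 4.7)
The plan is to lower-bound $\mu_1 = \int_T^\infty \Pr[X_1\geq x]\,dx$ by a union-bound rearrangement,
\[
\Pr[X_1\geq x]\;\geq\;\Pr[X\geq x]\;-\;\sum_{i\geq 2}\Pr[X_i\geq x],
\]
and then integrating this inequality over the interval $[T,a]$. The point of this interval is that on it \emph{both} sides can be controlled with the ingredients already established: the Case~3 hypothesis (negation of Case~1) gives a lower bound on $\int_0^a\Pr[X\geq x]\,dx$, while Lemma~\ref{lem_x1} gives a uniform upper bound on $\sum_{i\geq 2}\Pr[X_i\geq x]$ for $x\geq T$.

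First, I would extract the ``surplus'' hiding in Case~3. Being in Case~3 (and not in Case~1) means $\int_0^a\Pr[X\geq x]\,dx>ca$. The proof of Lemma~\ref{lem_thetan} already showed that $T<a$ in Case~3, so subtracting $\int_0^T\Pr[X\geq x]\,dx=1-d$ yields
\[
\int_T^a\Pr[X\geq x]\,dx\;>\;ca-(1-d).
\]
Second, for every $x\geq T$ the monotonicity $\Pr[X_i\geq x]\leq\Pr[X_i\geq T]$ combined with Lemma~\ref{lem_x1} gives
\[
\sum_{i\geq 2}\Pr[X_i\geq x]\;\leq\;\sum_{i\geq 2}\Pr[X_i\geq T]\;\leq\;b-g.
\]

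Third, I would integrate the union-bound inequality over $[T,a]$ using these two estimates:
\[
\int_T^a\Pr[X_1\geq x]\,dx\;\geq\;\int_T^a\Pr[X\geq x]\,dx\;-\;(b-g)(a-T)\;>\;ca-(1-d)-(b-g)(a-T).
\]
To clean this up, I use $T\geq 1-d$ from Equation~(\ref{eqn_T_lb}) to get $a-T\leq a-1+d$, and observe that $b-g>0$ (since $b>1\geq g$), so $(b-g)(a-T)\leq(b-g)(a-1+d)$. Since $\mu_1\geq\int_T^a\Pr[X_1\geq x]\,dx$, this gives the claimed bound $\mu_1\geq ca-(1-d)-(a-1+d)(b-g)$. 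There is no real obstacle here: the proof is essentially bookkeeping, and the only conceptual point is the choice of the interval $[T,a]$, which is forced by the two ingredients one wants to use.
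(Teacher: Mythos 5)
Your proof is correct and takes essentially the same route as the paper's: both decompose the integral over $[T,a]$, use the negation of the Case~1 condition together with the definition of $T$ to lower-bound $\int_T^a\Pr[X\geq x]\,dx$, and use Lemma~\ref{lem_x1} plus monotonicity and $T\geq 1-d$ to upper-bound the contribution of $X_2,\ldots,X_n$. The only cosmetic difference is that you state the pointwise union-bound rearrangement $\Pr[X_1\geq x]\geq\Pr[X\geq x]-\sum_{i\geq2}\Pr[X_i\geq x]$ explicitly before integrating, whereas the paper integrates the two sides separately and subtracts at the end.
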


\begin{proof}
By the union bound, we have,
\[\int_T^a\sum_{i=1}^n\Pr[X_i\geq x]dx\geq\int_T^a\Pr[X\geq x]dx=\int_0^a\Pr[X\geq x]dx-\int_0^T\Pr[X\geq x]dx\text{.}\]
Using the facts that $\int_0^a\Pr[X\geq x]dx>ca$ (else we would be in Case 1) and $\int_0^T\Pr[X\geq x]dx=1-d$ (definition of $T$), we have,
\begin{equation}\label{eqn_1n}
\int_T^a\sum_{i=1}^n\Pr[X_i\geq x]dx\geq ca-(1-d)\text{.}
\end{equation}
On the other hand, using Lemma \ref{lem_x1} and Equation (\ref{eqn_T_lb}), we also have,
\begin{equation}\label{eqn_2n}
\int_T^a\sum_{i=2}^n\Pr[X_i\geq x]dx\leq(a-T)(b-g)\leq(a-1+d)(b-g)\text{.}
\end{equation}
Subtracting Equation \ref{eqn_2n} from Equation \ref{eqn_1n}, we get,
\[\int_T^a\Pr[X_1\geq x]dx\geq ca-(1-d)-(a-1+d)(b-g)\text{.}\]
The claim follows by observing that $\mu_1=\int_T^{\infty}\Pr[X_1\geq x]dx\geq\int_T^a\Pr[X_1\geq x]dx$.
\end{proof}

Finally, $\sum_{i=1}^n\mu_i$ is easily bounded as
\begin{equation}\label{eqn_sum_mu}
\sum_{i=1}^n\mu_i=\sum_{i=1}^n\int_T^{\infty}\Pr[X_i\geq x]dx=\int_T^{\infty}\sum_{i=1}^n\Pr[X_i\geq x]dx\geq\int_T^{\infty}\Pr[X\geq x]dx=d
\end{equation}
where the inequality is due to union bound.

As a result of the above analysis, we get the following bound on the competitive ratio.

\begin{theorem}\label{thm_case3}
Suppose that the input set of distributions falls into Case 3 of the algorithm. Then the algorithm's expected profit is at least
\[h(1-d)+\left(\frac{1}{2}-\frac{b-g}{3}\right)\cdot(ca-(1-d)-(a-1+d)(b-g))+\frac{1}{2}\left(1-\frac{b-g}{3}\right)\cdot d\text{.}\]
\end{theorem}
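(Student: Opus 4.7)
The plan is to assemble Theorem \ref{thm_case3} by plugging the pieces already developed in Section \ref{subsec_3} into the basic decomposition (\ref{eqn_alg_esa}). The first sum $\sum_{k=1}^n\int_0^{\alpha_k}\Pr[z_k\geq x]dx$ is handled directly by Equation (\ref{eqn_term1_lb}), which gives the first term $h(1-d)$. For the second sum $\sum_{k=1}^n\int_{\alpha_k}^{\infty}\Pr[z_k\geq x]dx$, Equation (\ref{eqn_term2_lb}) rewrites it as
\[
\Bigl(\tfrac{1}{2}-\tfrac{b-g}{3}\Bigr)\mu_1+\tfrac{1}{2}\Bigl(1-\tfrac{b-g}{3}\Bigr)\sum_{i=1}^n\mu_i,
\]
so the task reduces to substituting the lower bound on $\mu_1$ from Lemma \ref{lem_mu1} and the lower bound $\sum_i\mu_i\geq d$ from Equation (\ref{eqn_sum_mu}).

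The only nontrivial point is that, to substitute these lower bounds, both coefficients in the expression above must be nonnegative: we need $b-g\leq 3/2$ for the coefficient of $\mu_1$, and $b-g\leq 3$ for the coefficient of $\sum_i\mu_i$. I would record this as a constraint on the parameters $a,b,c,d$ (in addition to (\ref{eqn_acd})) that must be verified when they are fixed at the end of Section \ref{sec_alg}; for the numerical choice that produces the final $1-1/e+1/400$ improvement, $b$ will be only slightly above $1$ and $g$ will be close to $h$, so $b-g$ is small and the hypothesis holds comfortably. Under this assumption, monotonicity of the expression in $\mu_1$ and $\sum_i\mu_i$ lets me replace each by its lower bound without reversing the inequality.

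Putting the pieces together, I would write
\begin{align*}
\alg &= \sum_{k=1}^n\int_0^{\alpha_k}\Pr[z_k\geq x]dx+\sum_{k=1}^n\int_{\alpha_k}^{\infty}\Pr[z_k\geq x]dx \\
 &\geq h(1-d)+\Bigl(\tfrac{1}{2}-\tfrac{b-g}{3}\Bigr)\mu_1+\tfrac{1}{2}\Bigl(1-\tfrac{b-g}{3}\Bigr)\sum_{i=1}^n\mu_i \\
 &\geq h(1-d)+\Bigl(\tfrac{1}{2}-\tfrac{b-g}{3}\Bigr)\bigl(ca-(1-d)-(a-1+d)(b-g)\bigr)+\tfrac{1}{2}\Bigl(1-\tfrac{b-g}{3}\Bigr)d,
\end{align*}
which is exactly the desired lower bound on $\alg$. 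The main potential obstacle is the sign of the coefficient $\tfrac{1}{2}-\tfrac{b-g}{3}$: if a parameter choice ever pushed $b-g$ above $3/2$, the substitution using Lemma \ref{lem_mu1} would go the wrong way and one would need a different bound on $\mu_1$. Since the analysis in this case crucially relies on $X_1$ being prominent and the other variables being unlikely to exceed $T$, staying in the regime where $b-g$ is small (so that both $g$ is close to $h$ and the $\mu_1$ coefficient is positive) is precisely what makes Case 3 usable, and this should be flagged as a condition to be preserved by the eventual parameter selection.
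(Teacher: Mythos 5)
Your proof is correct and follows exactly the paper's route: substitute Equation (\ref{eqn_term1_lb}) and Equation (\ref{eqn_term2_lb}) into the decomposition (\ref{eqn_alg_esa}), then lower-bound $\mu_1$ and $\sum_i\mu_i$ via Lemma \ref{lem_mu1} and Equation (\ref{eqn_sum_mu}). Your observation that the coefficients $\tfrac{1}{2}-\tfrac{b-g}{3}$ and $\tfrac{1}{2}\bigl(1-\tfrac{b-g}{3}\bigr)$ must be nonnegative for the final substitution to preserve the inequality is a legitimate point the paper leaves implicit; it is satisfied by the eventual parameter choice, where $b-g\approx 0.195 < 3/2$.
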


\begin{proof}
Substituting Equation (\ref{eqn_term1_lb}) and Equation (\ref{eqn_term2_lb}) into Equation (\ref{eqn_alg_esa}), and then using the bounds given by Equation (\ref{eqn_sum_mu}) and Lemma \ref{lem_mu1}, the claim follows.
\end{proof}

\subsection{The Overall Competitive Ratio}\label{subsec_cr}

We take $a$, $c$, $d$, and $g$ as independent parameters, so that $h=\frac{ca-1+d}{a-1+d}$, and because $(1-g)^{1/g}=(1-h)^{1/b}$, we get that $b=g\ln(1-h)/\ln(1-g)$. We numerically verified that for $a=1-1.31/e$, $c=0.98$, $d=0.62$, and $g=0.88$, we get the desired competitive ratio. It is easy to check $ac+d>1$, which we promised earlier. We get $h\approx0.925$, $b\approx1.075$, and the competitive ratio lower bounds given by Theorems \ref{thm_case1}, \ref{thm_case2}, and \ref{thm_case3} are all at least $c^*-O(n^{-1})$, for some $c^*>1-1/e+1/400$. It remains to show how the $O(n^{-1})$ term can be eliminated.

\begin{lemma}\label{lem_reduction}
If there exists a $c$-competitive algorithm for the Prophet Secretary problem with $N$ random variables, then there exists a $c$-competitive algorithm for the Prophet Secretary problem with $n$ random variables for every $n<N$.
\end{lemma}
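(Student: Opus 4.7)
The plan is to prove Lemma \ref{lem_reduction} via a padding reduction. Given an $n$-variable instance $X_1,\ldots,X_n$ with $\mathbb{E}[\max_i X_i]=1$, I would pad it to an $N$-variable instance by setting $Y_i=X_i$ for $i\le n$ and $Y_i\equiv 0$ for $n<i\le N$. The $Y_i$'s are independent and non-negative, and since the dummies contribute nothing to the maximum, $\mathbb{E}[\max_i Y_i]=1$. Hence the hypothesized $c$-competitive $N$-variable algorithm $A$ earns expected profit at least $c$ on this padded instance.

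Next, I would build an $n$-variable algorithm $A'$ that simulates $A$ online as follows. At the outset, $A'$ draws a uniformly random subset $T\subseteq[N]$ of size $n$; this specifies that the $t$-th smallest element of $T$ will receive, inside the simulation, the $t$-th actual arrival of the $n$-variable problem. When the $t$-th real sample $x$ arrives, $A'$ advances $A$ through all intermediate $N$-rounds by feeding each the value $0$ (which $A$ rejects because every threshold produced by the algorithm of Section \ref{sec_alg} is strictly positive; if one is worried about a zero threshold, the standard convention of breaking ties toward rejection suffices), and then feeds $x$ at the $N$-round corresponding to the $t$-th element of $T$. If $A$ picks $x$, so does $A'$. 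This simulation is entirely online and well-defined, and $A'$ stops as soon as $A$ picks a real sample (it never picks a dummy).

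The one thing to check is that the permutation $\tau\in S_N$ induced inside $A'$'s simulation is uniformly distributed on $S_N$, so that the guarantee on $A$ genuinely applies. This holds because the adversary's $\sigma\in S_n$ is uniform, $T$ is uniform of size $n$ and independent of $\sigma$, and an arbitrary (say, uniformly random and independent) bijection between $[N]\setminus T$ and the dummy indices $\{n+1,\ldots,N\}$ accounts for the remaining positions without affecting outcomes since the dummies are indistinguishable. A simple counting argument then yields $\Pr[\tau]=\frac{1}{\binom{N}{n}\,n!\,(N-n)!}=\frac{1}{N!}$ for every $\tau\in S_N$. Therefore $A$'s expected profit in the simulation equals its expected profit on the padded instance, which is at least $c$, and since the dummies contribute nothing this is exactly $A'$'s expected profit on the original $n$-variable instance. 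The main obstacle is simply the uniformity bookkeeping above; there is no substantial analytic content beyond the padding trick.
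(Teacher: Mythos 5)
Your proof is correct and takes essentially the same padding-and-random-embedding approach as the paper: pad with $N-n$ deterministic zeros, place the $n$ real arrivals into a uniformly random size-$n$ subset of the $N$ positions, and simulate the $N$-variable algorithm $A$ by feeding zeros in the remaining slots, noting that the induced permutation of $[N]$ is uniform. The only nit is that you need not (and, since the lemma quantifies over an arbitrary $c$-competitive $A$ whose thresholds you do not control, cannot in general) assert that $A$ rejects the dummies; if $A$ does pick a dummy, $A'$ simply declares $+\infty$ thereafter and earns $0$, which equals $A$'s profit on the padded instance in that event, so the identity between the two expected profits holds verbatim without any tie-breaking convention.
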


\begin{proof}
Let ALG($N$) denote the algorithm. The required algorithm ALG($n$) behaves as follows. Given $n$ random variables $X_1,\ldots,X_n$, it adds $N-n$ ghost random variables $X_{n+1},\ldots,X_N$ which take value zero deterministically. This does not change the expectation of the maximum. We now argue that ALG($n$) can essentially simulate the behavior of ALG($N$). Given a permutation $\sigma$ of $[n]$, we can generate a permutation $\sigma'$ of $[N]$ as follows. Arrange $n+1,\ldots,N$ uniformly at random in $N-n$ out of $N$ locations, then place $1,\ldots,n$ in the $n$ vacant locations in the order specified by $\sigma$. If $\sigma$ is a uniformly random permutation of $[n]$, then $\sigma'$ is a uniformly random permutation of $[N]$. ALG($n$) uses this trick as follows. First it places the ghost random variables into $N-n$ out of $N$ locations, and starts running ALG($N$). As soon as it encounters a vacant location, it asks for the next real input, and passes it on to ALG($N$). Clearly, since ALG($N$) is $c$-competitive, so is ALG($n$).

It is instructive to simplify the working of ALG($n$). Given the sequence of thresholds $\alpha_1,\ldots,\alpha_N$ chosen by ALG($N$), ALG($n$) essentially chooses a random subsequence of $n$ thresholds and uses them. Clearly, since the algorithm is $c$-competitive on an average, there exists a subsequence of $\alpha_1,\ldots,\alpha_N$ with $n$ thresholds which achieves $c$-competitiveness.
\end{proof}

\begin{corollary}
If there exists an algorithm with competitive ratio $c-O(n^{-1})$ for the Prophet Secretary problem on $n$ random variables, then for any $\varepsilon>0$ there exists a $(c-\varepsilon)$-competitive algorithm for the Prophet Secretary problem.
\end{corollary}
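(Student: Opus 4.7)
The plan is to combine the hypothesis with Lemma \ref{lem_reduction} in a straightforward way, separating the cases of large and small $n$. Let $K$ be the absolute constant hidden in the $O(n^{-1})$ error term, so that the hypothesized algorithm, call it ALG$^*(n)$, achieves expected profit at least $c - K/n$ on any instance with $n$ random variables (normalized to $\mathbb{E}[\max_i X_i] = 1$).

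Fix $\varepsilon > 0$. The first step is to choose $N$ large enough that $K/N < \varepsilon$; then for every $n \geq N$ the algorithm ALG$^*(n)$ is already $(c - \varepsilon)$-competitive, since $c - K/n \geq c - K/N > c - \varepsilon$. This handles all but finitely many values of $n$. For the remaining cases $n < N$, the second step is to invoke Lemma \ref{lem_reduction} with the $(c - \varepsilon)$-competitive algorithm ALG$^*(N)$ playing the role of ALG($N$); the lemma yields a $(c - \varepsilon)$-competitive algorithm for each $n < N$ by padding with $N - n$ deterministic zero random variables and simulating ALG$^*(N)$ on a uniformly random embedding of the real inputs.

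Assembling these pieces, the final algorithm is the one that, given input size $n$, dispatches to ALG$^*(n)$ when $n \geq N$ and to the algorithm produced by Lemma \ref{lem_reduction} when $n < N$. By construction it is $(c - \varepsilon)$-competitive on every instance, independently of $n$. Since $\varepsilon > 0$ was arbitrary, this proves the corollary.

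There is no real obstacle here; the only conceptual point is that the $O(n^{-1})$ loss becomes negligible once $n$ is large enough, and Lemma \ref{lem_reduction} lets us transport the guarantee obtained for large $n$ down to every smaller $n$ without any further loss in the competitive ratio. In particular, when specialized to the algorithm of Section \ref{sec_alg} with the parameters $a = 1 - 1.31/e$, $c = 0.98$, $d = 0.62$, $g = 0.88$ chosen in Section \ref{subsec_cr}, this corollary removes the $O(n^{-1})$ slack from Theorems \ref{thm_case1}, \ref{thm_case2}, and \ref{thm_case3}, thereby yielding Theorem \ref{thm_alg}.
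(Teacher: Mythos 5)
Your proof is correct and takes essentially the same approach as the paper: choose $N$ large enough to make the $O(N^{-1})$ loss below $\varepsilon$, then transport the guarantee to smaller $n$ via Lemma \ref{lem_reduction} by padding with zero-valued ghost variables. The only cosmetic difference is that you explicitly split into the cases $n \geq N$ and $n < N$, whereas the paper applies the lemma uniformly; both are equivalent.
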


\begin{proof}
Given an instance of the Prophet Secretary problem on $n$ random variables, choose $N$ large enough so that the $O(N^{-1})$ term in the competitive ratio is less than $\varepsilon$, and apply Lemma \ref{lem_reduction}.
\end{proof}

As a consequence of the above corollary, we have an algorithm for the Prophet Secretary problem with competitive ratio $c^*-\varepsilon$ for every $\varepsilon>0$. Since $c^*>1-1/e+1/400$, Theorem \ref{thm_alg} follows.

\section{A Hardness Result for Distribution-Insensitive Algorithms}\label{sec_hardness}

Recall the definition of a deterministic distribution-insensitive algorithm from Section \ref{sec_prelim}: such an algorithm chooses its thresholds $\alpha_1,\ldots,\alpha_n$ deterministically merely with the knowledge of the expected value of the maximum of the random variables, which we assumed to be one. Therefore, the adversary may adapt the input set of random variables to the algorithm subject to keeping the expectation of their maximum to be one. We now give an adversarial strategy which forces an upper bound of $11/15$ on the competitive ratio of any deterministic distribution-insensitive algorithm ALG. This proves Theorem \ref{thm_hardness}.

Let $n=3$, and let $\alpha_1,\alpha_2,\alpha_3$ be the thresholds chosen by a deterministic distribution-insensitive algorithm ALG. Without loss of generality, we assume $\alpha_3=0$, because if the algorithm does not pick any of the first two samples, then it can only benefit by picking the last one, no matter how small it is. Depending on the values of $\alpha_1$ and $\alpha_2$, the adversary chooses one of the following instances as input to the algorithm, where $\varepsilon>0$ is an arbitrarily small constant.
\begin{itemize}
\item \textbf{Instance 1}: $X_1$ is $1$ deterministically; $X_2$ and $X_3$ are both $0$ deterministically.
\item \textbf{Instance 2}: $X_1$ is $1$ deterministically; $X_2$ and $X_3$ are both $\alpha_1$ deterministically.
\item \textbf{Instance 3}: $X_1$ is $(1-(1-\varepsilon)(\alpha_1-\varepsilon))/\varepsilon$ with probability $\varepsilon$ and $\alpha_1-\varepsilon$ with probability $1-\varepsilon$; $X_2$ and $X_3$ are both $\alpha_2$ deterministically. (This instance is used only if $\alpha_1>\alpha_2$.)
\item \textbf{Instance 4}: $X_1$ is $(1-(1-\varepsilon)(\min(\alpha_1,\alpha_2)-\varepsilon))/\varepsilon$ with probability $\varepsilon$ and $\min(\alpha_1,\alpha_2)-\varepsilon$ with probability $1-\varepsilon$; $X_2$ and $X_3$ are both $0$ deterministically. (This instance is used only if $\min(\alpha_1,\alpha_2)>0$.)
\end{itemize}
Observe that for each of the instances above, $\mathbb{E}[\max_i X_i]=\mathbb{E}[X_1]=1$. In order to prove an upper bound on the competitive ratio of ALG, we first consider the case where one of $\alpha_1$ and $\alpha_2$ is larger than $1$, and analyze the algorithm's performance on Instance 1.

\begin{lemma}
If $\alpha_1>1$ or $\alpha_2>1$, then the competitive ratio of ALG is no larger than $2/3$.
\end{lemma}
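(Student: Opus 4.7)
The plan is to evaluate ALG on Instance 1 and directly bound its expected profit from above. In Instance 1 we have $X_1\equiv 1$ and $X_2=X_3\equiv 0$, so $\mathbb{E}[\max_i X_i]=1$ and the only way ALG can earn a non-zero profit is to accept the sample $x_1=1$ when it is presented.

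First I would record the probabilistic setup. By the uniformly random permutation $\sigma$, the (unique) nonzero variable $X_1$ appears at position $k$ with probability $1/3$ for each $k\in\{1,2,3\}$. To accept $x_1=1$ at position $k$ the algorithm needs $\alpha_k\leq 1$; since $\alpha_3=0$, whenever the computation reaches round $3$ still unspent the algorithm will certainly accept $x_1$ if $X_1$ is sitting there. Accepting a zero sample at an earlier round only terminates the loop with profit $0$, which can only decrease ALG's profit; since we want an upper bound, this works in our favor and we need not analyze it separately.

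Next I would split into two subcases according to the hypothesis. If both $\alpha_1>1$ and $\alpha_2>1$, then at positions $1$ and $2$ the algorithm will reject whatever it sees (including $x_1=1$), so it accepts $x_1$ only in the event $\sigma^{-1}(1)=3$, giving expected profit $\leq 1/3$. If exactly one of them is larger than $1$, say $\alpha_j>1\geq\alpha_{3-j}$ with $j\in\{1,2\}$, then ALG will accept $x_1$ on both of the equally likely events $\sigma^{-1}(1)=3-j$ (threshold $\alpha_{3-j}\leq 1$) and $\sigma^{-1}(1)=3$ (threshold $0$), while it misses $x_1$ only when $\sigma^{-1}(1)=j$; hence the expected profit is $\leq 2/3$.

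Combining the two subcases, whenever $\alpha_1>1$ or $\alpha_2>1$, the expected profit on Instance 1 is at most $2/3$, and since the benchmark is $1$ the competitive ratio is at most $2/3$. There is no real obstacle here; the only point to double-check is that an early acceptance of $0$ (which is possible if some $\alpha_k=0$) never improves ALG's profit above the bound computed, which is immediate.
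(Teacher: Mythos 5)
Your proof is correct and takes essentially the same approach as the paper's: both evaluate ALG on Instance 1 and observe that $X_1$ is necessarily rejected whenever it lands at a position whose threshold exceeds $1$, which happens with probability at least $1/3$, giving expected profit at most $2/3$. The paper states this in one line without the subcase split, but the substance of the argument is the same.
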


\begin{proof}
Consider Instance 1, where $X_1$ is $1$ deterministically, and $X_2$ and $X_3$ are both $0$ deterministically. Suppose $\alpha_1>1$. Then ALG misses $X_1$ with probability at least $1/3$, so the expectation of its profit is at most $2/3$. The same argument holds if $\alpha_2>1$.
\end{proof}

We therefore assume for the rest of this section that both $\alpha_1$ and $\alpha_2$ are at most $1$. In the next three lemmas, we analyze the algorithm's performance on Instances 2-4, each resulting in an upper bound on the algorithm's competitive ratio as a function of $\alpha_1$ and $\alpha_2$. Then we argue that the minimum of the three bounds is at most $11/15$ for any choice of $\alpha_1$ and $\alpha_2$.

\begin{lemma}\label{lem_1small}
The competitive ratio of ALG is no larger than $(1+2\alpha_1)/3$.
\end{lemma}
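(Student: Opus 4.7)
The plan is to analyze the behavior of ALG on Instance 2 directly, conditioning on the random order $\sigma$ in which the three samples arrive. Note first that for Instance 2 we have $\max_i X_i = 1$ with probability one, so the benchmark satisfies $\mathbb{E}[\max_i X_i] = 1$, which means the competitive ratio equals the expected profit of ALG on this instance.

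The key observation is that each of $X_2$ and $X_3$ takes value exactly $\alpha_1$, so their samples always meet the round-$1$ threshold $\alpha_1$ (we already assumed $\alpha_1 \leq 1$, and clearly $\alpha_1 \geq \alpha_1$). Consequently, whichever sample appears in the first round is always accepted: if $X_1$ is the one (probability $1/3$) the profit is $1$; otherwise (probability $2/3$) the profit is $\alpha_1$. Thus the algorithm never reaches round $2$ on this instance, and
\[
\alg \;=\; \tfrac{1}{3}\cdot 1 + \tfrac{2}{3}\cdot \alpha_1 \;=\; \frac{1 + 2\alpha_1}{3}\text{.}
\]

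Dividing by $\mathbb{E}[\max_i X_i] = 1$ gives the claimed upper bound on the competitive ratio. There is no real obstacle here: the only subtlety is confirming that both samples of $X_2$ and $X_3$ do meet $\alpha_1$ (hence the algorithm really does stop in round $1$), which is immediate from the construction of Instance 2 together with the standing assumption $\alpha_1 \leq 1$ carried over from the preceding lemma.
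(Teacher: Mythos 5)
Your proof is correct and matches the paper's own argument essentially line for line: analyze Instance 2, observe that every sample clears the round-one threshold $\alpha_1$ so ALG always stops immediately, and compute the expected profit as $\tfrac{1}{3}\cdot 1+\tfrac{2}{3}\cdot\alpha_1$. Nothing to add.
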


\begin{proof}
Consider Instance 2, where $X_1$ is $1$ deterministically, and $X_2$ and $X_3$ are both $\alpha_1$ deterministically. ALG necessarily accepts the first sample, because all the random variables are at least $\alpha_1$ with probability one. With probability $1/3$, $X_1$ appears first and ALG's profit is $1$, whereas with probability $2/3$, one of $X_2$ and $X_3$ comes first, resulting in the ALG's profit being $\alpha_1$. Thus, the expectation of ALG's profit is $(1+2\alpha_1)/3$. 
\end{proof}

\begin{lemma}\label{lem_1large2small}
If $\alpha_1>\alpha_2$, then the competitive ratio of ALG is no larger than $(2-\alpha_1+2\alpha_2)/3$.
\end{lemma}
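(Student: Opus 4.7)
The plan is to analyze the algorithm's expected profit on Instance 3, where $X_1$ equals a large value $V=(1-(1-\varepsilon)(\alpha_1-\varepsilon))/\varepsilon$ with probability $\varepsilon$ and equals $\alpha_1-\varepsilon$ with probability $1-\varepsilon$, while $X_2=X_3=\alpha_2$ deterministically. I will pick $\varepsilon>0$ small enough that $\alpha_1-\varepsilon>\alpha_2$ (possible since $\alpha_1>\alpha_2$) and $V>\alpha_1$. A quick calculation shows $\mathbb{E}[X_1]=1$, and since $\alpha_1-\varepsilon>\alpha_2$ the maximum of the three random variables equals $X_1$ always, so $\mathbb{E}[\max_i X_i]=1$, which is the required normalization.

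Next I will enumerate the six equally-likely positions of $X_1$ in the random permutation $\sigma$, grouping them into three cases, each occurring with probability $1/3$:

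\begin{enumerate}
\item $X_1$ arrives first. Threshold is $\alpha_1$. The algorithm accepts iff $X_1=V$, contributing $\varepsilon V$ in expectation. Otherwise $X_1=\alpha_1-\varepsilon<\alpha_1$ is rejected; the next sample is some $X_j=\alpha_2$, tested against $\alpha_2$, hence accepted. The conditional expected profit is $\varepsilon V+(1-\varepsilon)\alpha_2$.
\item $X_1$ arrives second. The first sample is some $X_j=\alpha_2<\alpha_1$, which is rejected. Then $X_1$ is tested against $\alpha_2$; since both possible values of $X_1$ exceed $\alpha_2$ (for small enough $\varepsilon$), it is accepted. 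The conditional expected profit is $\varepsilon V+(1-\varepsilon)(\alpha_1-\varepsilon)=1$.
\item $X_1$ arrives third. Both $X_2$ and $X_3$ arrive first; the first is rejected against $\alpha_1$, and the second is accepted against $\alpha_2$, giving profit $\alpha_2$.
\end{enumerate}

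Averaging the three conditional expectations with weight $1/3$, the expected profit of ALG equals
\[
\tfrac{1}{3}\bigl(\varepsilon V+(1-\varepsilon)\alpha_2\bigr)+\tfrac{1}{3}\bigl(\varepsilon V+(1-\varepsilon)(\alpha_1-\varepsilon)\bigr)+\tfrac{1}{3}\alpha_2.
\]
Since $\varepsilon V=1-(1-\varepsilon)(\alpha_1-\varepsilon)\to 1-\alpha_1$ as $\varepsilon\to 0$, the expression tends to $\tfrac{1}{3}\bigl(2(1-\alpha_1)+\alpha_1+2\alpha_2\bigr)=(2-\alpha_1+2\alpha_2)/3$. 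Letting $\varepsilon$ be arbitrarily small, the competitive ratio is no larger than this quantity, as claimed.

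The main obstacle (really a bookkeeping issue) is the second case: one must verify that even in the low-value branch of $X_1$, the value $\alpha_1-\varepsilon$ still clears the threshold $\alpha_2$, which is precisely why the hypothesis $\alpha_1>\alpha_2$ is needed together with the freedom to shrink $\varepsilon$. Once this is handled, the remaining cases reduce to direct comparison of deterministic values against the published thresholds.
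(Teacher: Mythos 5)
Your proof is correct and follows essentially the same approach as the paper: you analyze Instance~3 by conditioning on the position of $X_1$ in the random permutation, compute the algorithm's profit in each of the three cases, and let $\varepsilon\to 0$. The one point you make explicit that the paper leaves implicit is that $\varepsilon$ must be taken small enough that $\alpha_1-\varepsilon>\alpha_2$, which is indeed needed for the acceptance logic in cases 1 and 2 to work as described.
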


\begin{proof}
Consider Instance 3, where $X_1$ is $(1-(1-\varepsilon)(\alpha_1-\varepsilon))/\varepsilon$ with probability $\varepsilon$ and $\alpha_1-\varepsilon$ with probability $1-\varepsilon$, whereas both $X_2$ and $X_3$ are $\alpha_2$ deterministically. Suppose $X_1$ appears first. If its value is realized to be $(1-(1-\varepsilon)(\alpha_1-\varepsilon))/\varepsilon$, then the algorithm picks it; otherwise the algorithm picks $\alpha_2$ in the next round. Suppose $X_1$ appears second. Then the first sample, which is necessarily $\alpha_2$, is rejected, and since $X_1\geq\alpha_2$ with probability one, the algorithm picks whatever value is realized for $X_1$. If $X_1$ appears last, then the algorithm rejects the first sample, which is $\alpha_2$, and accepts the second one, which is also $\alpha_2$. Thus, the expected profit of the algorithm is
\[\frac{1}{3}\cdot\left(\varepsilon\cdot\frac{1-(1-\varepsilon)(\alpha_1-\varepsilon)}{\varepsilon}+(1-\varepsilon)\cdot\alpha_2\right)+\frac{1}{3}\cdot\mathbb{E}[X_1]+\frac{1}{3}\cdot\alpha_2=\frac{1}{3}\cdot(1-(1-\varepsilon)(\alpha_1-\varepsilon-\alpha_2)+1+\alpha_2)\text{.}\]
As $\varepsilon\rightarrow0$, this approaches $(2-\alpha_1+2\alpha_2)/3$.
\end{proof}

\begin{lemma}\label{lem_2large}
If $\min(\alpha_1,\alpha_2)=\alpha>0$, then the competitive ratio of ALG is no larger than $(3-2\alpha)/3$.
\end{lemma}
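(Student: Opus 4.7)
The plan is to feed ALG the random variables of Instance 4, exploit the fact that $\alpha=\min(\alpha_1,\alpha_2)>0$ forces both of the first two thresholds to be positive, and then simply split by the arrival position of $X_1$. First I would verify that $\mathbb{E}[\max_iX_i]=\mathbb{E}[X_1]=1$ by direct computation
\[
\varepsilon\cdot\frac{1-(1-\varepsilon)(\alpha-\varepsilon)}{\varepsilon}+(1-\varepsilon)(\alpha-\varepsilon)=1,
\]
so Instance 4 is a legal input. I would also note that, for $\varepsilon$ small enough, the ``large'' realization of $X_1$ exceeds $\max(\alpha_1,\alpha_2)\leq 1$ (since $(1-\alpha)/\varepsilon+1+\alpha-\varepsilon\to\infty$ if $\alpha<1$, and equals $2-\varepsilon>1$ if $\alpha=1$), while the ``small'' realization $\alpha-\varepsilon$ is strictly below both $\alpha_1$ and $\alpha_2$.

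Now condition on where $X_1$ sits in the random order, each case having probability $1/3$. If $X_1$ comes first, ALG accepts the large realization (probability $\varepsilon$) and rejects the small realization (probability $1-\varepsilon$); after a rejection, the next two samples are $X_2=X_3=0$, which are rejected against $\alpha_2>0$ but accepted against $\alpha_3=0$, giving profit $0$. So the expected profit in this case equals $1-(1-\varepsilon)(\alpha-\varepsilon)$. If $X_1$ comes second, ALG first sees a $0$ and rejects it (threshold $\alpha_1>0$), then faces $X_1$ against $\alpha_2$; again only the large realization is accepted and a small realization leads to profit $0$ down the line, giving the same expected profit $1-(1-\varepsilon)(\alpha-\varepsilon)$. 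If $X_1$ comes third, the first two zeros are rejected (thresholds $\alpha_1,\alpha_2>0$) and ALG is compelled to take $X_1$ against $\alpha_3=0$, earning $\mathbb{E}[X_1]=1$.

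Averaging, the total expected profit is
\[
\frac{1}{3}\bigl(2\bigl(1-(1-\varepsilon)(\alpha-\varepsilon)\bigr)+1\bigr)=\frac{3-2(1-\varepsilon)(\alpha-\varepsilon)}{3},
\]
which tends to $(3-2\alpha)/3$ as $\varepsilon\to 0$. Since this holds for every $\varepsilon>0$, the competitive ratio of ALG is at most $(3-2\alpha)/3$, as required.

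The only subtlety, and what I would take care with, is the boundary behavior of the ``large'' value of $X_1$: one must check that it really does exceed both $\alpha_1$ and $\alpha_2$ once $\varepsilon$ is taken sufficiently small, which is where the earlier reduction to $\alpha_1,\alpha_2\leq 1$ is used. Otherwise everything is a direct case analysis.
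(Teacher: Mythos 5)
Your proof is correct and follows essentially the same approach as the paper: you use Instance 4, condition on the arrival position of $X_1$, and let $\varepsilon\to 0$. The only difference is that you spell out the three position cases explicitly whereas the paper merges the first two, but the computation and conclusion are identical.
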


\begin{proof}
Consider Instance 4, where $X_1$ is $(1-(1-\varepsilon)(\alpha-\varepsilon))/\varepsilon$ with probability $\varepsilon$ and $\alpha-\varepsilon$ with probability $1-\varepsilon$, whereas both $X_2$ and $X_3$ are $0$ deterministically. If $X_1$ appears first or second, then the algorithm picks it if and only if its realized value is $(1-(1-\varepsilon)(\alpha-\varepsilon))/\varepsilon$. If $X_1$ appears last, then the algorithm picks it irrespective of its realized value. Thus, the expected profit of the algorithm is
\[\frac{2}{3}\cdot\left(\varepsilon\cdot\frac{1-(1-\varepsilon)(\alpha-\varepsilon)}{\varepsilon}\right)+\frac{1}{3}\cdot\mathbb{E}[X_1]=\frac{1}{3}\cdot(2-2(1-\varepsilon)(\alpha-\varepsilon)+1)\text{.}\]
As $\varepsilon\rightarrow0$, this approaches $(3-2\alpha)/3$.
\end{proof}

The above lemmas allow us to prove Theorem \ref{thm_hardness} as follows.

\begin{proof}[Proof of Theorem \ref{thm_hardness}]
First, suppose $\alpha_1>\alpha_2>0$. Then by Lemmas \ref{lem_1small}, \ref{lem_1large2small}, and \ref{lem_2large}, the competitive ratio of the algorithm is at most
\[\frac{\min(1+2\alpha_1,2-\alpha_1+2\alpha_2,3-2\alpha_2)}{3}\leq\frac{1\times(1+2\alpha_1)+2\times(2-\alpha_1+2\alpha_2)+2\times(3-2\alpha_2)}{(1+2+2)\times3}=\frac{11}{15}\text{.}\]
Next, suppose $\alpha_1>\alpha_2=0$. Then by Lemmas \ref{lem_1small} and \ref{lem_1large2small}, the competitive ratio of the algorithm is at most
\[\frac{\min(1+2\alpha_1,2-\alpha_1)}{3}\leq\frac{1\times(1+2\alpha_1)+2\times(2-\alpha_1)}{(1+2)\times3}=\frac{5}{9}<\frac{11}{15}\text{.}\]
On the other hand, if $0<\alpha_1\leq\alpha_2$, then by Lemmas \ref{lem_1small} and \ref{lem_2large}, the competitive ratio of the algorithm is at most
\[\frac{\min(1+2\alpha_1,3-2\alpha_1)}{3}\leq\frac{1+2\alpha_1+3-2\alpha_1}{6}=\frac{2}{3}<\frac{11}{15}\text{.}\]
Finally, if $0=\alpha_1\leq\alpha_2$, then by Lemma \ref{lem_1small}, the competitive ratio of the algorithm is at most $1/3<11/15$.
\end{proof}

\section{Concluding Remarks}\label{sec_rem}

Our understanding of the Prophet Secretary problem is still limited, and there is a lot of scope for diving deeper. We showed that $1-1/e$ is not the correct competitive ratio for the Prophet Secretary problem. Under the natural restriction of deterministic distribution-insensitivity, we showed that no algorithm can have competitive ratio larger than $11/15$.  We conjecture that none of the bounds known bounds for the Prophet Secretary problem is tight. As this is a fundamental problem, finding the right competitive ratio is an important question which is still wide open.

\section*{Acknowledgments}

The authors thank Amos Fiat for his insightful involvement in the discussions. The second author thanks Matt Weinberg for introducing him to optimal stopping theory.


\bibliographystyle{plain}
\bibliography{references}

\begin{thebibliography}{10}

\bibitem{AbolhassaniEEHK_STOC17}
Melika Abolhassani, Soheil Ehsani, Hossein Esfandiari, MohammadTaghi
  Hajiaghayi, Robert~D. Kleinberg, and Brendan Lucier.
\newblock Beating 1-1/e for ordered prophets.
\newblock In {\em {STOC}}, pages 61--71, 2017.

\bibitem{Alaei_FOCS11}
Saeed Alaei.
\newblock Bayesian combinatorial auctions: Expanding single buyer mechanisms to
  many buyers.
\newblock In {\em {FOCS}}, pages 512--521, 2011.

\bibitem{ChawlaHMS_STOC10}
Shuchi Chawla, Jason~D. Hartline, David~L. Malec, and Balasubramanian Sivan.
\newblock Multi-parameter mechanism design and sequential posted pricing.
\newblock In {\em {STOC}}, pages 311--320, 2010.

\bibitem{CorreaFHOV_EC17}
Jos{\'{e}}~R. Correa, Patricio Foncea, Ruben Hoeksma, Tim Oosterwijk, and Tjark
  Vredeveld.
\newblock Posted price mechanisms for a random stream of customers.
\newblock In {\em {EC}}, pages 169--186, 2017.

\bibitem{Dynkin_SMD63}
E.B.\ Dynkin.
\newblock The optimum choice of the instant for stopping a markov process.
\newblock {\em Soviet Math.\ Dokl.}, 1963.

\bibitem{EhsaniHKS_arXiv17}
Soheil Ehsani, MohammadTaghi Hajiaghayi, Thomas Kesselheim, and Sahil Singla.
\newblock Prophet secretary for combinatorial auctions and matroids.
\newblock {\em CoRR}, abs/1710.11213, 2017.

\bibitem{EsfandiariHLM_ESA15}
Hossein Esfandiari, MohammadTaghi Hajiaghayi, Vahid Liaghat, and Morteza
  Monemizadeh.
\newblock Prophet secretary.
\newblock In {\em {ESA}}, pages 496--508, 2015.

\bibitem{Ferguson_SS89}
Thomas~S. Ferguson.
\newblock Who solved the secretary problem?
\newblock {\em Statistical Science}, 4(3):282--289, 1989.

\bibitem{HajiaghayiKS_AAAI07}
Mohammad~Taghi Hajiaghayi, Robert~D. Kleinberg, and Tuomas Sandholm.
\newblock Automated online mechanism design and prophet inequalities.
\newblock In {\em {AAAI}}, pages 58--65, 2007.

\bibitem{HillK_AP82}
T.P.\ Hill and Robert~P.\ Kertz.
\newblock Comparisons of stop rule and supremum expectations of {I}.\ {I}.\
  {D}.\ random variables.
\newblock {\em The Annals of Probability}, 10:336--345, 1982.

\bibitem{KleinbergW_STOC12}
Robert Kleinberg and S.~Matthew Weinberg.
\newblock Matroid prophet inequalities.
\newblock In {\em {STOC}}, pages 123--136, 2012.

\bibitem{KrengelS_BAMS77}
Ulrich Krengel and Louis Sucheston.
\newblock Semiamarts and finite values.
\newblock {\em Bull.\ Amer.\ Math.\ Soc.}, 83:745--747, 1977.

\bibitem{KrengelS_APRT78}
Ulrich Krengel and Louis Sucheston.
\newblock On semiamarts, amarts, and processes with finite value.
\newblock {\em Adv.\ in Prob.\ Related Topics}, 4:197--266, 1978.

\bibitem{Rubinstein_STOC16}
Aviad Rubinstein.
\newblock Beyond matroids: secretary problem and prophet inequality with
  general constraints.
\newblock In {\em {STOC}}, pages 324--332, 2016.

\bibitem{RubinsteinS_SODA17}
Aviad Rubinstein and Sahil Singla.
\newblock Combinatorial prophet inequalities.
\newblock In {\em {SODA}}, pages 1671--1687, 2017.

\bibitem{Yan_SODA11}
Qiqi Yan.
\newblock Mechanism design via correlation gap.
\newblock In {\em {SODA}}, pages 710--719, 2011.

\end{thebibliography}


\appendix

\section{Proofs omitted from Section \ref{sec_alg}}

One claim which we will use repeatedly in this section is the following.

\begin{lemma}\label{lem_taylor}
For $x\geq 0$, $0\leq 1-e^{-x}\leq x$ and $0\leq 1-(1+x)e^{-x}\leq x^2$.
\end{lemma}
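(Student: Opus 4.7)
The lemma collects two standard elementary inequalities, both of which follow from the convexity of $e^{-x}$ (or equivalently from monotonicity of simple auxiliary functions). I would prove each of the four inequalities separately, using Part 1 in the proof of Part 2.

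For the first inequality, $0 \le 1 - e^{-x} \le x$: the lower bound is immediate because $e^{-x}$ is non-increasing on $[0,\infty)$ with $e^0 = 1$. For the upper bound, define $f(x) = x - 1 + e^{-x}$. Then $f(0) = 0$, and $f'(x) = 1 - e^{-x}$ is non-negative by the lower bound just established, so $f$ is non-decreasing on $[0,\infty)$ and hence $f(x) \ge 0$, i.e.\ $1 - e^{-x} \le x$.

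For the second inequality, $0 \le 1 - (1+x)e^{-x} \le x^2$: for the lower bound, let $h(x) = 1 - (1+x)e^{-x}$; then $h(0) = 0$ and $h'(x) = -e^{-x} + (1+x)e^{-x} = x e^{-x} \ge 0$ for $x \ge 0$, so $h(x) \ge 0$. For the upper bound, I would invoke Part 1: the inequality $1 - x \le e^{-x}$ (rearranged form of $1 - e^{-x} \le x$) holds for all $x \ge 0$, and multiplying both sides by the non-negative quantity $1 + x$ yields $(1-x)(1+x) \le (1+x)e^{-x}$, i.e.\ $1 - x^2 \le (1+x)e^{-x}$, which rearranges to exactly $1 - (1+x)e^{-x} \le x^2$.

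There is no real obstacle here; the only mildly non-trivial step is recognising that the upper bound in Part 2 reduces, via multiplication by $1+x \ge 0$, to the upper bound in Part 1, thereby avoiding a second derivative computation. If one preferred not to reuse Part 1, an equally short alternative is to set $g(x) = x^2 - 1 + (1+x)e^{-x}$ and verify $g(0) = 0$ and $g'(x) = x(2 - e^{-x}) \ge 0$ for $x \ge 0$.
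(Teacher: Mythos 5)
Your proof is correct and follows essentially the same route as the paper: the lower bounds are elementary, and the key step for the final upper bound is multiplying $1-x\leq e^{-x}$ (equivalently $1-e^{-x}\leq x$) by the nonnegative factor $1+x$. The only cosmetic differences are that you verify the easy inequalities via derivatives where the paper simply cites $1+x\leq e^x$, and your rearrangement $(1-x)(1+x)\leq(1+x)e^{-x}$ is marginally cleaner than the paper's, which produces the extra $+x$ terms that then cancel.
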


\begin{proof}
The lower bound on $1-e^{-x}$ is obvious. The lower bound on $1-(1+x)e^{-x}$ follows easily from the fact that $1+x\leq e^x$. Since $1-x\leq e^{-x}$, we have $1-e^{-x}\leq x$. Multiplying this by $1+x$, which is positive, we get $1+x-(1+x)e^{-x}\leq x^2+x$, that is, $1-(1+x)e^{-x}\leq x^2$, as required.
\end{proof}

\subsection{Proofs omitted from the analysis of Case 1}\label{app_1}

Let $\tau=\lfloor n(1+\ln(1-a))+1\rfloor$. Our next lemma states the asymptotic behavior of certain error terms which appear in the proof of Theorem \ref{thm_case1}.

\begin{lemma}\label{lem_asymptotic1}
Suppose $\alpha_1\ldots,\alpha_n$ are defined as in Case 1 of the algorithm. Then there exists absolute positive constants, $\gamma_1^>$, $\gamma_1^=$, $\gamma_1^<$, such that
\begin{enumerate}
\item For $k>\tau$, $\frac{1-\alpha_k}{n}-\alpha_k+\alpha_{k+1}\geq-\frac{\gamma_1^>}{n^2}$.
\item For $k=\tau$, $\frac{1-c\alpha_k}{n}-\alpha_k+\alpha_{k+1}\geq-\frac{\gamma_1^=}{n}$.
\item For $k<\tau$, $\frac{1-c\alpha_k}{n}-\alpha_k+\alpha_{k+1}\geq-\frac{\gamma_1^<}{n^2}$.
\end{enumerate} 
\end{lemma}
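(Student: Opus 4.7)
My plan rests on recognizing that both threshold formulas arise from discretizing first-order ODEs. By direct differentiation, the first-formula function $\alpha(t) = 1 - e^{t-1}$ satisfies $\alpha'(t) = -(1-\alpha(t))$, and the second-formula function $\alpha(t) = 1/c - (1/c - a)\bigl(e^{t-1}/(1-a)\bigr)^c$ satisfies $\alpha'(t) = -(1-c\alpha(t))$. Moreover, both formulas take the common value $a$ at the transition point $t^* = 1+\ln(1-a)$, and by definition of $\tau$, the second formula is used precisely for $k\leq\tau$ (evaluated at $t = (k-1)/n \leq t^*$) while the first formula is used for $k > \tau$. Under this lens, the three parts of the lemma are three instances of ``forward Euler has quadratic local truncation error,'' with one caveat at the transition index.

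For case 3 ($k<\tau$), both $\alpha_k$ and $\alpha_{k+1}$ come from the second formula, which I would rewrite as $\alpha_k = 1/c - B\, e^{c(k-1)/n}$ for the absolute constant $B = (1/c-a)(1-a)^{-c}e^{-c}>0$. Factoring gives $\alpha_{k+1} - \alpha_k = -(1/c - \alpha_k)(e^{c/n} - 1)$; since $e^{c/n}-1\leq c/n + c^2/n^2$ on $c/n\in[0,1]$ (an elementary consequence of $e^x\leq 1+x+x^2$ for $x\in[0,1]$), rearrangement yields
\[
\frac{1-c\alpha_k}{n} - \alpha_k + \alpha_{k+1} \;\geq\; -\frac{c(1-c\alpha_k)}{n^2} \;\geq\; -\frac{1}{n^2},
\]
so $\gamma_1^< = 1$ suffices. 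Case 1 ($k>\tau$) is identical with $c$ replaced by $1$ throughout, giving $\gamma_1^> = 1$.

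The content of the lemma lies in case 2 ($k=\tau$), where $\alpha_\tau$ uses the second formula but $\alpha_{\tau+1}$ uses the first. I would set $\delta_1 = \tau/n - t^* \in (0,1/n]$ and $\delta_2 = t^* - (\tau-1)/n \in [0,1/n]$, noting $\delta_1 + \delta_2 = 1/n$, and Taylor-expand each threshold around $t^*$ (where both formulas equal $a$):
\[
\alpha_\tau = a + (1-ac)\delta_2 + O(\delta_2^2), \qquad \alpha_{\tau+1} = a - (1-a)\delta_1 + O(\delta_1^2),
\]
where the $O$-constants depend only on the uniform bounds of the second derivatives of the two pieces on $[0,1]$. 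Substituting and using $1/n - \delta_2 = \delta_1$, the constant terms and the linear-in-$\delta_2$ terms cancel exactly, leaving
\[
\frac{1-c\alpha_\tau}{n} - \alpha_\tau + \alpha_{\tau+1} = \bigl[(1-ac) - (1-a)\bigr]\delta_1 + O(1/n^2) = a(1-c)\delta_1 + O(1/n^2),
\]
which is nonnegative up to $O(1/n^2)$ and hence certainly $\geq -\gamma_1^=/n$ for a suitable absolute constant. The key insight---and the only real obstacle---is recognizing that the value-continuity of the two formulas at $t^*$ is what makes the cancellation work, while the derivative discontinuity (slopes $-(1-ac)$ versus $-(1-a)$) produces precisely the positive leftover $a(1-c)\delta_1 \geq 0$ rather than a sign-negative term. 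Everything else is routine Taylor remainder bookkeeping.
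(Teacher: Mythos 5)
Your proposal is correct, and it takes a genuinely cleaner route than the paper's own proof. The paper handles each case by direct algebraic manipulation: it factors out a common exponential from $\frac{1-c\alpha_k}{n}-\alpha_k+\alpha_{k+1}$ to expose a $1-(1+x)e^{-x}$ term and then applies the Taylor-type bound of Lemma~\ref{lem_taylor}; for the transition index $k=\tau$ it brackets $e^{(\tau-1)/n-1}$ and $e^{\tau/n-1}$ via $(1-a)e^{-1/n}\le e^{(\tau-1)/n-1}$ and $e^{\tau/n-1}\le(1-a)e^{1/n}$ and grinds through, landing on the looser $O(1/n)$ error, with $\gamma_1^< = (1/c-a)(1-a)^{-c}$ for the off-transition constant. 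Your reframing---the two threshold formulas are exact solutions of the ODEs $\alpha'=-(1-\alpha)$ and $\alpha'=-(1-c\alpha)$ that agree in value (though not in slope) at $t^*=1+\ln(1-a)$, so the three parts are local-truncation-error estimates for forward Euler---buys three things the paper's computation does not make visible: (i) a unified conceptual explanation of all three bounds, (ii) the cleaner constant $\gamma_1^<=1$ via the factorization $\alpha_{k+1}-\alpha_k=-(1/c-\alpha_k)(e^{c/n}-1)$, and (iii) the observation that the $k=\tau$ error is in fact $O(1/n^2)$, not merely $O(1/n)$, because continuity at $t^*$ makes the leading terms cancel and the derivative jump contributes the sign-positive leftover $a(1-c)\delta_1\ge0$. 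Point (iii) is a genuine strengthening of the paper's part~2, though it does not improve the overall $O(1/n)$ bound on $\gamma_1/n$ in Theorem~\ref{thm_case1} since the other $n-1$ indices still collectively contribute $O(1/n)$. Your Taylor-remainder bookkeeping is left implicit, but the second derivatives of both pieces are bounded by absolute constants on $[0,1]$, so the $O(1/n^2)$ remainders are legitimate; spelling those constants out would complete the argument.
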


\begin{proof}
For $k>\tau$, we have
\[\frac{1-\alpha_k}{n}-\alpha_k+\alpha_{k+1}=\frac{e^{(k-1)/n-1}}{n}+e^{(k-1)/n-1}-e^{k/n-1}=-e^{k/n-1}\left(-e^{-1/n}\left(\frac{1}{n}+1\right)+1\right)\text{.}\]
Using Lemma \ref{lem_taylor} for the expression in the parenthesis and using $k\leq n$, we have,
\[\frac{1-\alpha_k}{n}-\alpha_k+\alpha_{k+1}\geq-\frac{e^{k/n-1}}{n^2}\geq-\frac{1}{n^2}=-\frac{\gamma_1^>}{n^2}\]
where $\gamma_1^>=1$.

For $k=\tau$, we have
\begin{eqnarray*}
\frac{1-c\alpha_k}{n}-\alpha_k+\alpha_{k+1} & = & \frac{(1-ca)}{n}\cdot\left(\frac{e^{(k-1)/n-1}}{1-a}\right)^c-\frac{1}{c}+\left(\frac{1}{c}-a\right)\left(\frac{e^{(k-1)/n-1}}{1-a}\right)^c+1-e^{k/n-1}\\
 & = & \left(\frac{1}{c}-a\right)\left(\frac{e^{(k-1)/n-1}}{1-a}\right)^c\left(\frac{c}{n}+1\right)-\frac{1}{c}+1-e^{k/n-1}\text{.}
\end{eqnarray*}
Since $n(1+\ln(1-a))<k=\tau\leq n(1+\ln(1-a))+1$, we have $e^{(k-1)/n-1}=e^{k/n-1/n-1}\geq e^{1+\ln(1-a)}\cdot e^{-1-1/n}=(1-a)\cdot e^{-1/n}$, and $e^{k/n-1}\leq e^{\ln(1-a)+1/n}=(1-a)e^{1/n}$. Substituting these bounds above, we get
\begin{eqnarray*}
\frac{1-c\alpha_k}{n}-\alpha_k+\alpha_{k+1} & \geq & \left(\frac{1}{c}-a\right)e^{-c/n}\left(\frac{c}{n}+1\right)-\frac{1}{c}+1-(1-a)e^{1/n}\\
 & = & \frac{e^{-c/n}}{c}\left(\frac{c}{n}+1\right)-ae^{-c/n}\left(\frac{c}{n}+1\right)-\frac{1}{c}+1-(1-a)e^{1/n}\\
 & = & -\frac{1}{c}\left(-e^{-c/n}\left(\frac{c}{n}+1\right)+1\right)+a\left(1-e^{-c/n}\left(\frac{c}{n}+1\right)\right)-(1-a)(e^{1/n}-1)\text{.}
\end{eqnarray*}
Using Lemma \ref{lem_taylor} and the fact that $0\leq a\leq1$, we have,
\[\frac{1-c\alpha_k}{n}-\alpha_k+\alpha_{k+1}\geq-\frac{1}{c}\cdot\frac{c^2}{n^2}-\frac{(1-a)e^{1/n}}{n}\geq-\frac{c}{n}-\frac{(1-a)e}{n}=-\frac{c+(1-a)e}{n}=-\frac{\gamma_1^=}{n}\]
where $\gamma_1^==c+(1-a)e$.

For $k<\tau$, we have
\begin{eqnarray*}
\frac{1-c\alpha_k}{n}-\alpha_k+\alpha_{k+1} & = & \frac{(1-ca)}{n}\cdot\left(\frac{e^{(k-1)/n-1}}{1-a}\right)^c+\left(\frac{1}{c}-a\right)\left(\frac{e^{(k-1)/n-1}}{1-a}\right)^c-\left(\frac{1}{c}-a\right)\left(\frac{e^{k/n-1}}{1-a}\right)^c\\
 & = & -\left(\frac{1}{c}-a\right)\left(\frac{e^{k/n-1}}{1-a}\right)^c\left(-e^{-c/n}\left(\frac{c}{n}+1\right)+1\right)\text{.}
\end{eqnarray*}
Using Lemma \ref{lem_taylor} and using the facts $k\leq n$ and $0\leq a\leq 1\leq1/c$, we have,
\[\frac{1-c\alpha_k}{n}-\alpha_k+\alpha_{k+1}\geq-\left(\frac{1}{c}-a\right)\cdot\frac{1}{(1-a)^c}\cdot\frac{c^2}{n^2}=-\frac{\gamma_1^<}{n^2}\]
where $\gamma_1^<=(1/c-a)\cdot(1-a)^{-c}$.
\end{proof}

\begin{proof}[Proof of Theorem \ref{thm_case1}]
We have
\begin{eqnarray*}
\sum_{k=1}^n\int_{\alpha_k}^{\infty}\Pr[z_k\geq x]dx & = & \sum_{k=1}^\tau\int_{\alpha_k}^{\infty}\Pr[z_k\geq x]dx+\sum_{k=\tau+1}^n\int_{\alpha_k}^{\infty}\Pr[z_k\geq x]dx\\
 & \geq & \sum_{k=1}^\tau\theta(k)\cdot\frac{1-c\alpha_k}{n}+\sum_{k=\tau+1}^n\theta(k)\cdot\frac{1-\alpha_k}{n}\text{.}
\end{eqnarray*}
Here we used Lemma \ref{lem_case1} for the first term and Proposition \ref{prop_infty} for the second term. Substituting this and the bound of Proposition \ref{prop_0} in Equation (\ref{eqn_alg_esa}), we get
\[\alg\geq\alpha_1+\sum_{k=1}^{\tau}\theta(k)\cdot\left(\frac{1-c\alpha_k}{n}-\alpha_k+\alpha_{k+1}\right)+\sum_{k=\tau+1}^n\theta(k)\cdot\left(\frac{1-\alpha_k}{n}-\alpha_k+\alpha_{k+1}\right)\text{.}\]
Using Lemma \ref{lem_asymptotic1} and the fact that $\theta(k)\leq1$, we have,
\begin{eqnarray*}
\alg & \geq & \alpha_1-\sum_{k=1}^{\tau-1}\theta(k)\cdot\frac{\gamma_1^<}{n^2}-\theta(\tau)\cdot\frac{\gamma_1^=}{n}-\sum_{k=\tau+1}^n\theta(k)\cdot\frac{\gamma_1^>}{n^2}\\
 & \geq & \alpha_1-\frac{\gamma_1^<+\gamma_1^=+\gamma_1^>}{n}\\
 & = & \frac{1}{c}-\left(\frac{1}{c}-a\right)(e(1-a))^{-c}-\frac{\gamma_1}{n}
\end{eqnarray*}
where $\gamma_1=\gamma_1^<+\gamma_1^=+\gamma_1^>$.
\end{proof}

\subsection{Proofs omitted from the analysis of Case 2}\label{app_2}

Let $\tau=\left\lfloor n\left(1-\frac{1}{b}\ln\frac{b+d-bd}{d}\right)+1\right\rfloor$. Our next lemma states the asymptotic behavior of certain error terms which appear in the proof of Theorem \ref{thm_case2}.

\begin{lemma}\label{lem_asymptotic2}
Suppose $\alpha_1\ldots,\alpha_n$ are defined as in Case 2 of the algorithm. Then there exists absolute positive constants, $\gamma_2^>$, $\gamma_2^=$, $\gamma_2^<$, such that
\begin{enumerate}
\item For $k<\tau$, $\frac{1-\alpha_k}{n}-\alpha_k+\alpha_{k+1}\geq-\frac{\gamma_2^<}{n^2}$.
\item For $k=\tau$, $\frac{1-\alpha_k}{n}-\alpha_k+\alpha_{k+1}\geq-\frac{\gamma_2^=}{n}$.
\item For $k>\tau$, $\frac{b+d-bd-b\alpha_k}{n}-\alpha_k+\alpha_{k+1}\geq-\frac{\gamma_2^>}{n^2}$.
\end{enumerate} 
\end{lemma}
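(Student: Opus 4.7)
The plan is to mirror the structure of Lemma~\ref{lem_asymptotic1}: in each of the three regimes I would substitute the closed form of $\alpha_k$ and $\alpha_{k+1}$, cancel the dominant contributions algebraically, and then bound the remainder with Lemma~\ref{lem_taylor}. To lighten notation let $K := \frac{b+d-bd}{b}$ and $C := d\!\left(\frac{b+d-bd}{d}\right)^{1/b}$, so the two branches of the threshold read $\alpha_k = K\bigl(1 - e^{b((k-1)/n-1)}\bigr)$ and $\alpha_k = 1 - C e^{(k-1)/n-1}$ respectively. A direct check that I would verify first shows the continuity identity that both branches give $1-d$ at the boundary $(k-1)/n = \beta := 1 - \frac{1}{b}\ln\frac{b+d-bd}{d}$; equivalently, $C e^{\beta-1} = d$ and $K e^{b(\beta-1)} = d/b$. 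These two identities are the algebraic glue tying the three cases together.

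For $k < \tau$ both $\alpha_k$ and $\alpha_{k+1}$ use the second branch, and a one-line substitution collapses the target expression to $-C e^{(k-1)/n-1}\bigl(e^{1/n} - 1 - \tfrac{1}{n}\bigr)$. Lemma~\ref{lem_taylor} at $x = 1/n$ (multiplying the ``$1-(1+x)e^{-x}\le x^2$'' estimate by $e^{1/n} \le e$) yields $e^{1/n} - 1 - \tfrac{1}{n} \le e/n^2$; since $e^{(k-1)/n-1} \le 1$, we obtain the claimed bound with $\gamma_2^< = Ce$. The case $k > \tau$ is symmetric: both thresholds use the first branch, the target expression collapses to $-K e^{b((k-1)/n-1)}\bigl(e^{b/n} - 1 - \tfrac{b}{n}\bigr)$, and Lemma~\ref{lem_taylor} at $x = b/n$ gives the bound with $\gamma_2^> = K b^2 e^b$.

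The main obstacle is the boundary case $k = \tau$, where $\alpha_\tau$ comes from the second branch but $\alpha_{\tau+1}$ from the first; because the two branches have different slopes at $\beta$, one can only hope for an $O(1/n)$ bound here, matching the weaker claim of the lemma. I would parametrize $\delta_1 := \beta - (\tau-1)/n$ and $\delta_2 := \tau/n - \beta$, both lying in $[0, 1/n]$ with $\delta_1 + \delta_2 = 1/n$. Using the continuity identities, $\alpha_\tau = 1 - d e^{-\delta_1}$ and $\alpha_{\tau+1} = K - (d/b) e^{b\delta_2}$, so (using $K - 1 = d/b - d$) the target expression telescopes to
\[
\frac{d\,e^{-\delta_1}}{n} \;+\; d\bigl(e^{-\delta_1} - 1\bigr) \;+\; \frac{d}{b}\bigl(1 - e^{b\delta_2}\bigr),
\]
which vanishes identically at $\delta_1 = \delta_2 = 0$ (reflecting the continuity of the two branches at $\beta$). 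Dropping the positive first term and applying the elementary inequalities $e^{-\delta_1} - 1 \ge -\delta_1$ and $1 - e^{b\delta_2} \ge -b\delta_2 e^{b\delta_2} \ge -b\delta_2 e^b$, then using $\delta_1 + \delta_2 = 1/n$, yields a lower bound of $-de^b/n$, so $\gamma_2^= = d e^b$ suffices. Combining the three estimates proves the lemma.
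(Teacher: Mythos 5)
Your proof is correct and follows essentially the same approach as the paper: substitute the closed forms of $\alpha_k$, collapse the dominant terms, and bound the remainder via Lemma~\ref{lem_taylor}; your $\delta_1,\delta_2$ parametrization of the boundary case $k=\tau$ is a clean repackaging of the paper's use of the two-sided bounds $\beta<\tau/n\le\beta+1/n$, and it even yields the slightly sharper constant $\gamma_2^{=}=de^b$ versus the paper's $d(1+e^b)$. (One cosmetic slip: the displayed expression for $k=\tau$ equals $d/n$, not $0$, at $\delta_1=\delta_2=0$ — only the last two terms vanish there — but since you drop the positive $\frac{d e^{-\delta_1}}{n}$ term immediately, this does not affect the bound.)
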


\begin{proof}
For $k<\tau$, we have
\begin{eqnarray*}
\frac{1-\alpha_k}{n}-\alpha_k+\alpha_{k+1} & = & d\cdot\left(\frac{b+d-bd}{d}\right)^{1/b}\cdot\left(\frac{e^{(k-1)/n-1}}{n}+e^{(k-1)/n-1}-e^{k/n-1}\right)\\
 & = & -d\cdot\left(\frac{b+d-bd}{d}\right)^{1/b}\cdot e^{k/n-1}\cdot\left(-e^{-1/n}\left(\frac{1}{n}+1\right)+1\right)\text{.}
\end{eqnarray*}
Using Lemma \ref{lem_taylor} and the fact that $k\leq n$, we have,
\[\frac{1-\alpha_k}{n}-\alpha_k+\alpha_{k+1}\geq-d\cdot\left(\frac{b+d-bd}{d}\right)^{1/b}\cdot\frac{1}{n^2}=-\frac{\gamma_2^<}{n^2}\]
where $\gamma_2^<=d((b+d+bd)/d)^{1/b}$.

For $k=\tau$, we have
\[1-\frac{1}{b}\ln\frac{b+d-bd}{d}<\frac{k}{n}\leq1-\frac{1}{b}\ln\frac{b+d-bd}{d}+\frac{1}{n}\text{.}\]
Thus,
\begin{eqnarray}
\frac{1-\alpha_k}{n}-\alpha_k & = & \frac{d}{n}\left(\frac{b+d-bd}{d}\right)^{1/b}e^{(k-1)/n-1}-1+d\left(\frac{b+d-bd}{d}\right)^{1/b}e^{(k-1)/n-1}\nonumber\\
 & = & d\left(\frac{b+d-bd}{d}\right)^{1/b}e^{(k-1)/n-1}\left(\frac{1}{n}+1\right)-1\nonumber\\
 & \geq & de^{-1/n}\left(\frac{1}{n}+1\right)-1\label{eqn_asym1}
\end{eqnarray}
where we used the lower bound on $k/n$ in the inequality. We also have,
\begin{eqnarray}
\alpha_{k+1} & = & (1-e^{b(k/n-1)})\frac{b+d-bd}{b}\nonumber\\
 & \geq & \left(1-\frac{de^{b/n}}{b+d-bd}\right)\frac{b+d-bd}{b}\nonumber\\
 & = & \frac{b+d-bd}{b}-\frac{de^{b/n}}{b}\nonumber\\
 & = & 1+\frac{d}{b}-d-\frac{de^{b/n}}{b}\label{eqn_asym2}
\end{eqnarray}
where we used the upper bound on $k/n$ in the inequality. Adding (\ref{eqn_asym1}) and (\ref{eqn_asym2}),
\begin{eqnarray*}
\frac{1-\alpha_k}{n}-\alpha_k+\alpha_{k+1} & \geq & de^{-1/n}\left(\frac{1}{n}+1\right)+\frac{d}{b}-d-\frac{de^{b/n}}{b}\\
 & \geq & -d\left(-e^{-1/n}\left(\frac{1}{n}+1\right)+1\right)-\frac{de^{b/n}}{b}(1-e^{-b/n})\text{.}
\end{eqnarray*}
Using Lemma \ref{lem_taylor}, we have,
\[\frac{1-\alpha_k}{n}-\alpha_k+\alpha_{k+1}\geq-\frac{d}{n^2}-\frac{de^{b/n}}{b}\cdot\frac{b}{n}\geq-\frac{d}{n}-\frac{de^b}{n}=-\frac{d(1+e^b)}{n}=-\frac{\gamma_2^=}{n}\]
where $\gamma_2^==d(1+e^b)$.

For $k>\tau$, we have
\[\frac{b+d-bd-b\alpha_k}{n}=\frac{(b+d-bd)\cdot e^{b((k-1)/n-1)}}{n}\text{,}\]
\[\alpha_{k+1}-\alpha_k=(e^{b((k-1)/n-1)}-e^{b(k/n-1)})\cdot\frac{b+d-bd}{b}=\frac{b+d-bd}{b}\cdot e^{b((k-1)/n-1)}\cdot(1-e^{b/n})\text{.}\]
Adding the above two equations, we get,
\begin{eqnarray*}
\frac{b+d-bd-b\alpha_k}{n}-\alpha_k+\alpha_{k+1} & = & \frac{b+d-bd}{b}\cdot e^{b((k-1)/n-1)}\cdot\left(\frac{b}{n}+1-e^{b/n}\right)\\
  & = & -\frac{b+d-bd}{b}\cdot e^{b(k/n-1)}\cdot\left(-e^{b/n}\left(\frac{b}{n}+1\right)+1\right)\text{.}
\end{eqnarray*}
Using Lemma \ref{lem_taylor} and the fact that $k\leq n$, we have,
\[\frac{b+d-bd-b\alpha_k}{n}-\alpha_k+\alpha_{k+1}\geq-\frac{b+d-bd}{b}\cdot\frac{b^2}{n^2}=-\frac{b(b+d-bd)}{n^2}=-\frac{\gamma_2^>}{n^2}\]
where $\gamma_2^>=b(b+d-bd)$.
\end{proof}

\begin{proof}[Proof of Theorem \ref{thm_case2}]
We have
\begin{eqnarray*}
\sum_{k=1}^n\int_{\alpha_k}^{\infty}\Pr[z_k\geq x]dx & = & \sum_{k=1}^\tau\int_{\alpha_k}^{\infty}\Pr[z_k\geq x]dx+\sum_{k=\tau+1}^n\int_{\alpha_k}^{\infty}\Pr[z_k\geq x]dx\\
 & \geq & \sum_{k=1}^\tau\theta(k)\cdot\frac{1-\alpha_k}{n}+\sum_{k=\tau+1}^n\theta(k)\cdot\frac{b+d-bd-b\alpha_k}{n}\text{.}
\end{eqnarray*}
Here we used Proposition \ref{prop_infty} for the first term and Lemma \ref{lem_case2} for the second term. Substituting this and the bound of Proposition \ref{prop_0} in Equation (\ref{eqn_alg_esa}), we get
\[\alg\geq\alpha_1+\sum_{k=1}^{\tau}\theta(k)\cdot\left(\frac{1-\alpha_k}{n}-\alpha_k+\alpha_{k+1}\right)+\sum_{k=\tau+1}^n\theta(k)\cdot\left(\frac{b+d-bd-b\alpha_k}{n}-\alpha_k+\alpha_{k+1}\right)\text{.}\]
Using Lemma \ref{lem_asymptotic2} and the fact that $\theta(k)\leq1$, we have,
\begin{eqnarray*}
\alg & \geq & \alpha_1-\sum_{k=1}^{\tau-1}\theta(k)\cdot\frac{\gamma_2^<}{n^2}-\theta(\tau)\cdot\frac{\gamma_2^=}{n}-\sum_{k=\tau+1}^n\theta(k)\cdot\frac{\gamma_2^>}{n^2}\\
 & \geq & \alpha_1-\frac{\gamma_2^<+\gamma_2^=+\gamma_2^>}{n}\\
 & = & 1-\frac{d}{e}\left(\frac{b+d-bd}{d}\right)^{1/b}-\frac{\gamma_2}{n}
\end{eqnarray*}
where $\gamma_2=\gamma_2^<+\gamma_2^=+\gamma_2^>$.
\end{proof}




\section{The IID Prophet Inequality}\label{app_iid}

Hill and Kertz \cite{HillK_AP82} analyzed the optimal dynamic programming algorithm for the IID Prophet Inequality problem, and characterized its competitive ratio implicitly as follows.

\begin{definition}
For each $n>1$, the function $\phi_n:[0,\infty)\times[0,\infty)\longrightarrow\mathbb{R}$ is defined as
\[\phi_n(w,x)=\frac{n}{n-1}w^{(n-1)/n}+\frac{x}{n-1}\text{.}\]
For each $n>1$ and $j=0,\ldots,n-1$, the function $\eta_{j,n}:[0,\infty)\longrightarrow\mathbb{R}$ is defined recursively as
\[\eta_{0,n}(\alpha)=\phi_n(0,\alpha)\text{,}\]
\[\eta_{j,n}(\alpha)=\phi_n(\eta_{j-1,n}(\alpha),\alpha)\text{.}\]
\end{definition}

With the above definition, Hill and Kertz proved the following claim on the competitive ratio for the IID Prophet Inequality.

\begin{claim}[Theorem A of Hill and Kertz \cite{HillK_AP82}]
There is a unique $\alpha_n>0$ for which $\eta_{n-1,n}(\alpha_n)=1$. Furthermore, the competitive ratio for the Prophet Inequality with $n$ IID random variables is \textbf{exactly} $c_{\text{iid}}(n)=(1+\alpha_n)^{-1}$.
\end{claim}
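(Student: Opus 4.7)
The plan is to split the argument into two parts: (1) existence and uniqueness of $\alpha_n>0$ with $\eta_{n-1,n}(\alpha_n)=1$, which is a direct monotonicity argument, and (2) identification of the IID competitive ratio as $(1+\alpha_n)^{-1}$, which requires finding the worst-case distribution and matching its recursion with the one defining $\eta_{j,n}$. For part (1), I would induct on $j$ to show that each $\eta_{j,n}\colon[0,\infty)\to[0,\infty)$ is continuous and strictly increasing with $\eta_{j,n}(0)=0$ and $\eta_{j,n}(\alpha)\to\infty$. The base case $\eta_{0,n}(\alpha)=\alpha/(n-1)$ is linear, and the inductive step follows because $\phi_n(w,x)=\tfrac{n}{n-1}w^{(n-1)/n}+\tfrac{x}{n-1}$ is continuous and strictly increasing in each of $w,x\geq 0$, tending to infinity with either argument. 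The intermediate value theorem then delivers a unique $\alpha_n>0$ with $\eta_{n-1,n}(\alpha_n)=1$.

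For the competitive ratio, let $X$ be a nonnegative random variable with CDF $F$, let $V_j$ denote the expected payoff of the optimal policy with $j$ IID copies of $X$ remaining (so $V_0=0$ and $V_j=V_{j-1}+\int_{V_{j-1}}^{\infty}(1-F(t))\,dt$), and write $M_j=\int_0^{\infty}\bigl(1-F(t)^j\bigr)\,dt=\mathbb{E}[\max_{i\leq j}X_i]$. The competitive ratio is $c_{\text{iid}}(n)=\inf_F V_n(F)/M_n(F)$. A variational argument, redistributing mass of $F$ within each interval $[V_{j-1},V_j]$ to its endpoints, preserves the DP constraints $V_j-V_{j-1}=\int_{V_{j-1}}^{\infty}(1-F)$ while not decreasing $M_n=\int(1-F^n)\,dt$, and so reduces the infimum to distributions supported on $\{V_0,\ldots,V_{n-1}\}$ together with one atom at a large value. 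A Lagrangian optimization over the jump heights $p_j=1-F(V_j)$ subject to the DP constraints then forces a worst-case power-law relation among consecutive $p_j$; this is the origin of the exponent $(n-1)/n$ appearing in $\phi_n$.

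Having identified the optimal $F$, one computes $M_j$ on this distribution explicitly and proves by induction on $j$ that the normalized quantities $\zeta_j:=(M_j-V_j)/\Delta$, for a suitable scale $\Delta$ determined by the last-interval height, satisfy the recursion $\zeta_j=\phi_n(\zeta_{j-1},\alpha)$ with $\zeta_0=\phi_n(0,\alpha)$ and $\alpha$ a normalized form of $V_n$. Therefore $\zeta_{n-1}=\eta_{n-1,n}(\alpha)$, and the worst-case identity $V_n/M_n=(1+\alpha)^{-1}$ is equivalent to $\eta_{n-1,n}(\alpha)=1$, i.e., $\alpha=\alpha_n$, yielding $c_{\text{iid}}(n)=(1+\alpha_n)^{-1}$. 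The main obstacle is the two-stage variational reduction: verifying rigorously that the infimum is attained on step-function distributions, and that the Lagrangian optimality conditions on the jump heights reproduce exactly the functional form $\phi_n$. Once these are established, the final matching with $\eta_{j,n}$ is essentially bookkeeping.
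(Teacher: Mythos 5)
The paper does not prove this claim at all: it is quoted verbatim as Theorem~A of Hill and Kertz \cite{HillK_AP82} in Appendix~\ref{app_iid}, and the appendix's actual contribution is the separate identity $c_{\text{cfhov}}(n)=c_{\text{iid}}(n)$. So there is no internal proof to compare against; I can only evaluate your sketch on its own terms.

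Your first part is fine. Each $\eta_{j,n}$ is built from $\eta_{0,n}(\alpha)=\alpha/(n-1)$ by composing with $\phi_n$, which is continuous and strictly increasing in each nonnegative argument, so a trivial induction gives that $\eta_{n-1,n}$ is a continuous, strictly increasing bijection $[0,\infty)\to[0,\infty)$ with $\eta_{n-1,n}(0)=0$; the intermediate value theorem then yields a unique $\alpha_n>0$ with $\eta_{n-1,n}(\alpha_n)=1$. For the second part you correctly reduce to $c_{\text{iid}}(n)=\inf_F V_n(F)/M_n(F)$ (the DP value is optimal pointwise, so its worst-case ratio is the problem's competitive ratio), but the reduction from there to a step distribution is not a proof and contains a circularity you do not resolve. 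The intervals $[V_{j-1},V_j]$ are functionals of $F$: once you perturb $F$ inside $[V_{j-1},V_j]$, the quantity $\int_{V_{j-1}}^\infty(1-F)$ changes, hence $V_j$ itself moves, hence the interval over which you are "redistributing" moves. You cannot both hold the partition fixed and claim the DP constraints are preserved. Nor do you justify that $M_n=\int(1-F^n)$ is nondecreasing under the proposed endpoint-redistribution; $1-F^n$ is a concave function of $1-F$, which actually pushes in the opposite direction from what a naive mean-preserving spread would suggest. Finally, your closing recursion $\zeta_j=\phi_n(\zeta_{j-1},\alpha)$ is asserted, not derived; the exponent $(n-1)/n$ has to come out of solving the stationarity conditions, and you explicitly flag this as unverified. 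So part~(1) is complete, but part~(2) is a plausible road map with the crucial variational and optimality steps missing, and as written it does not establish the claimed \textbf{exact} value.
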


Hill and Kertz managed to determine the value of $c_{iid}(n)$ with a reasonable accuracy for $n\leq10000$, for instance, they found that $\alpha_{10000}\approx0.341$, implying $c_{\text{iid}}\leq c_{\text{iid}}(10000)\approx1/1.341<0.746$. However, for larger $n$, they only proved a (weak) lower bound of $1-1/e$ on $c_{iid}(n)$.

Recently, Correa et al.\ \cite{CorreaFHOV_EC17}, while analyzing posted price mechanisms for a randomly arranged sequence of customers, also showed that their techniques result in an algorithm for the IID Prophet Inequality. They gave the following implicitly defined bound on its competitive ratio.

\begin{claim}[Correa et al.\ \cite{CorreaFHOV_EC17}\footnote{Correa et al.\ do not state this as an explicit theorem. However, they mention essentially this in the section with the heading ``Bounding $\alpha_1$ through a recursion''.}]\label{claim_CFHOV}
There exists a unique sequence $x_0,\ldots,x_n$ such that $x_0=1$, $x_n=0$, and for every $i=1,\ldots,n-1$,
\[\frac{x_{i-1}^n}{n}-\frac{x_i^n}{n}=\frac{x_i^{n-1}}{n-1}-\frac{x_{i+1}^{n-1}}{n-1}\text{.}\]
Furthermore, the competitive ratio of Correa et al.'s algorithm for the Prophet Inequality with $n$ IID random variables is \textbf{at least} $c_{\text{cfhov}}(n)=(n(1-x_1^{n-1}))^{-1}$.
\end{claim}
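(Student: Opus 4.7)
The claim has two parts: existence and uniqueness of the sequence $(x_i)_{i=0}^n$, and the competitive-ratio lower bound for Correa et al.'s algorithm. I would handle them in sequence.

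For existence and uniqueness, rewrite the recursion in forward form,
\[x_{i+1}^{n-1} = x_i^{n-1} - \frac{n-1}{n}\bigl(x_{i-1}^n - x_i^n\bigr)\text{,}\]
so that fixing $x_0 = 1$ reduces the entire sequence to a one-parameter family indexed by $t := x_1$. Let $\Phi(t)$ denote the resulting value of $x_n$. At $t = 1$ the constant sequence $x_i \equiv 1$ satisfies the recursion, so $\Phi(1) = 1$. At $t = 0$ the second step gives $x_2^{n-1} = -(n-1)/n < 0$, so the sequence leaves the admissible range $[0,1]$ immediately. A standard shooting argument --- showing that $\Phi$ is continuous and strictly monotone in $t$ on the range for which the sequence stays in $[0,1]$ --- combined with the intermediate value theorem then yields a unique $t^* \in (0,1)$ with $\Phi(t^*) = 0$, and monotonicity gives uniqueness.

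For the competitive ratio, Correa et al.'s algorithm is a time-varying threshold algorithm in which the threshold $\tau_i$ in round $i$ is chosen from the common CDF $F$ so that the probability of surviving the first $i-1$ rounds is the quantity prescribed by the $x_j$'s. Letting $p_i$ be the probability of reaching round $i$, the expected profit is $\alg = \sum_i p_i \cdot \mathbb{E}[X \cdot \mathbb{I}[X \geq \tau_i]]$ and the benchmark is $\mathbb{E}[\max_j X_j] = \int_0^\infty (1 - F(y)^n)\,dy$. I would split both quantities into slices over the intervals $[\tau_i, \tau_{i-1}]$. The recursion in the claim is precisely the identity equating the per-slice benchmark contribution (from integrating $1 - F^n$) against the per-slice profit contribution (through the telescoping differences $x_i^{n-1} - x_{i+1}^{n-1}$), up to the uniform factor $n(1 - x_1^{n-1})$. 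Summing the slice-wise comparisons yields the claimed ratio.

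The hard part is the competitive-ratio argument, because the bound must hold for every distribution, not just nicely behaved ones. The standard device is a reduction to a canonical worst-case distribution, obtained by perturbing $F$ so that it has atoms exactly at $\tau_1, \ldots, \tau_n$ without improving the ratio; on this canonical distribution the recursion becomes an equality and the slice-wise comparison is tight. Carrying out this reduction cleanly --- and verifying that the exponents $n$ and $n-1$ in the recursion are precisely what is needed to match $\int y\, d F^n$ against $\int (1 - F^{n-1})\, dy$ on each slice --- is where the detailed bookkeeping lies. The existence and uniqueness portion is elementary by comparison, once the shooting framework is in place.
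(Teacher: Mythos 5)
The paper does not prove this claim; it cites it from Correa et al., and the footnote attached to the claim explicitly acknowledges that even Correa et al.\ do not state it as a formal theorem. The paper's own contribution in the appendix on the IID Prophet Inequality is the \emph{subsequent} claim, namely that $c_{\text{cfhov}}(n)=c_{\text{iid}}(n)$, whose proof takes the present claim as given. There is therefore no in-paper proof to compare your attempt against.

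Evaluated on its own terms, your sketch has the right overall shape but leaves two genuine gaps that you partly acknowledge. For existence and uniqueness, rewriting the recursion as a forward map with $x_0=1$ does reduce it to a shooting problem in $t=x_1$, and $\Phi(1)=1$ follows from the constant sequence. But strict monotonicity of $\Phi$ is essentially equivalent to the uniqueness you are trying to prove, and you assert it rather than establish it; moreover the forward step requires extracting an $(n-1)$-th root of $x_i^{n-1}-\tfrac{n-1}{n}\bigl(x_{i-1}^n-x_i^n\bigr)$, which fails to be real once the right-hand side goes negative and $n-1$ is even, so the domain on which $\Phi$ is even defined must be pinned down before the intermediate value theorem applies. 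For the competitive ratio, the slicing picture --- matching $\int_0^\infty\bigl(1-F(y)^n\bigr)\,dy$ against telescoping differences of $x_i^{n-1}$ over quantile slices, with the normalizing factor $n(1-x_1^{n-1})$ emerging --- is the right kind of argument, but you explicitly defer both the slice-wise verification and the reduction to a worst-case distribution with atoms at the thresholds, and those deferred steps are where all of Correa et al.'s work actually lies. As written, the proposal is an outline of a plausible proof strategy, not a proof.
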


Correa et al.\ also proved $c_{\text{cfhov}}=\inf_nc_{\text{cfhov}}(n)>0.745$. Together with the upper bound on $c_{\text{iid}}$, we immediately have $0.745<c_{\text{cfhov}}\leq c_{\text{iid}}<0.746$, which means that $c_{\text{cfhov}}$ and $c_{\text{iid}}$ are already very close. Surprisingly, they are, in fact, equal.

\begin{claim}
For all $n>1$, $c_{\text{cfhov}}(n)=c_{\text{iid}}(n)$. Hence, $c_{\text{cfhov}}=c_{\text{iid}}$.
\end{claim}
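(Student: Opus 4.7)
The plan is to show that the Correa et al.\ sequence $x_0,\ldots,x_n$ encodes exactly the same information as the Hill--Kertz iterates by exhibiting an explicit substitution that converts one recursion into the other. Define
\[u_j=x_{n-1-j}^{n}\qquad(j=0,1,\ldots,n-1),\]
so that $u_{n-1}=x_0^n=1$ and $u_{-1}$ corresponds to $x_n^n=0$. Under this substitution $u_{j-1}^{(n-1)/n}=x_{n-j}^{\,n-1}$, which is exactly the kind of term that appears on the right-hand side of $\phi_n$. The first step is to rewrite the Correa relation with $i=n-j$ as
\[(n-1)u_j=(n-1)u_{j-1}+nu_{j-1}^{(n-1)/n}-nu_{j-2}^{(n-1)/n}\]
for $j=1,\ldots,n-1$ (where $u_{-1}^{(n-1)/n}$ is interpreted as $0$ using $x_n=0$). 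Rearranging gives
\[(n-1)u_j-nu_{j-1}^{(n-1)/n}=(n-1)u_{j-1}-nu_{j-2}^{(n-1)/n},\]
so the quantity on the left is constant in $j$. Evaluating at $j=1$ (where the $u_{-1}$ term vanishes) shows the constant equals $(n-1)u_0=(n-1)x_{n-1}^n$.

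Call this common value $\alpha$. Then the recursion $u_j=\frac{n}{n-1}u_{j-1}^{(n-1)/n}+\frac{\alpha}{n-1}=\phi_n(u_{j-1},\alpha)$ holds for $j=1,\ldots,n-1$, while $u_0=\alpha/(n-1)=\phi_n(0,\alpha)$. By induction on $j$, therefore, $u_j=\eta_{j,n}(\alpha)$ for every $j$. In particular $\eta_{n-1,n}(\alpha)=u_{n-1}=1$, and by the uniqueness clause of Hill--Kertz's characterization this forces $\alpha=\alpha_n$. This yields the key identity
\[\alpha_n=(n-1)\,x_{n-1}^{\,n}.\]

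The final step is to relate $x_1^{n-1}$ to $x_{n-1}^n$ by telescoping. Summing the Correa relation over $i=1,\ldots,n-1$ collapses the right side to $x_1^{n-1}-x_n^{n-1}=x_1^{n-1}$ and the left side to $x_0^n-x_{n-1}^n=1-x_{n-1}^n$, giving
\[(n-1)(1-x_{n-1}^{\,n})=n\,x_1^{\,n-1}.\]
Substituting $\alpha_n=(n-1)x_{n-1}^n$ and rearranging yields $n(1-x_1^{\,n-1})=1+\alpha_n$, which is exactly the statement $c_{\text{cfhov}}(n)=c_{\text{iid}}(n)$. Passing to the infimum over $n$ then gives $c_{\text{cfhov}}=c_{\text{iid}}$.

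The only delicate point is guessing the substitution $u_j=x_{n-1-j}^{n}$; once it is in hand the boundary condition $x_n=0$ lines up perfectly with the base case $\eta_{0,n}(\alpha)=\alpha/(n-1)$, and the rest of the argument is bookkeeping. The sole potential obstacle is confirming that the inductive constant $(n-1)u_j-nu_{j-1}^{(n-1)/n}$ is genuinely well-defined at $j=1$; this is precisely where the Correa boundary $x_n=0$ is used, so it must be invoked explicitly rather than absorbed into the generic recursion.
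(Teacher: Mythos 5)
Your proof is correct and rests on the same key observation as the paper's: the power-$n$ substitution linking the Hill--Kertz iterates to the Correa sequence. You run it in the opposite direction. The paper starts from $\eta_{j,n}(\alpha_n)$, sets $y_i=(\eta_{n-1-i,n}(\alpha_n))^{1/n}$, verifies that $(y_i)$ satisfies the Correa recursion and boundary conditions, and then invokes the uniqueness of the Correa sequence to conclude $x_i=y_i$, after which $x_1^{n-1}$ is read off directly from the defining recursion at $j=n-1$. You instead start from the Correa sequence, set $u_j=x_{n-1-j}^n$, show that the quantity $(n-1)u_j-nu_{j-1}^{(n-1)/n}$ is constant, identify that constant with $\alpha$, recognize $u_j=\eta_{j,n}(\alpha)$, and invoke the uniqueness of $\alpha_n$ in the Hill--Kertz characterization to pin down $\alpha=\alpha_n=(n-1)x_{n-1}^n$. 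Your telescoping sum over the Correa relation to get $(n-1)(1-x_{n-1}^n)=nx_1^{n-1}$ is a genuinely nice alternative to the paper's direct evaluation of the last recursion step; it produces the same identity in one clean line. One small point worth making explicit: to apply Hill--Kertz's uniqueness you need $\alpha>0$, i.e.\ $x_{n-1}>0$; this does follow, since $\alpha=0$ would force $u_j\equiv0$ via the recursion, contradicting $u_{n-1}=1$, but the argument should say so.
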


\begin{proof}
An equivalent way of defining the functions $\eta_{j,n}$ is the following. $\eta_{-1,n}(\alpha)=0$ and $\eta_{j,n}(\alpha)=\phi_n(\eta_{j-1,n}(\alpha),\alpha)$ for $j=0,\ldots,n-1$. For any $j=0,\ldots,n-2$ and any $\alpha$, we have,
\begin{eqnarray}
\eta_{j,n}(\alpha) & = & \frac{n}{n-1}(\eta_{j-1,n}(\alpha))^{(n-1)/n}+\frac{\alpha}{n-1}\text{,}\label{eqn_eta_rec1}\\
\eta_{j+1,n}(\alpha) & = & \frac{n}{n-1}(\eta_{j,n}(\alpha))^{(n-1)/n}+\frac{\alpha}{n-1}\text{.}\nonumber
\end{eqnarray}
Subtracting, we get, for any $j=0,\ldots,n-2$ and any $\alpha$,
\begin{equation}\label{eqn_eta_rec2}
\frac{\eta_{j+1,n}(\alpha)}{n}-\frac{\eta_{j,n}(\alpha)}{n}=\frac{(\eta_{j,n}(\alpha))^{(n-1)/n}}{n-1}-\frac{(\eta_{j-1,n}(\alpha))^{(n-1)/n}}{n-1}\text{.}
\end{equation}

Let $y_i=(\eta_{n-1-i,n}(\alpha_n))^{1/n}$ for $i=0,\ldots,n-1$, and $y_n=0$. Then $y_0=(\eta_{n-1,n}(\alpha_n))^{1/n}=1$, because $\eta_{n-1,n}(\alpha_n)=1$, by definition of $\alpha_n$. Also, by Equation (\ref{eqn_eta_rec2}), we have
\[\frac{y_{i-1}^n}{n}-\frac{y_i^n}{n}=\frac{y_i^{n-1}}{n-1}-\frac{y_{i+1}^{n-1}}{n-1}\text{.}\]
Thus, by Claim \ref{claim_CFHOV}, $x_i=y_i$ for all $i$. Also, substituting $j=n-1$ and $\alpha=\alpha_n$ in Equation (\ref{eqn_eta_rec1}), we get,
\[1=\eta_{n-1,n}(\alpha_n)=\frac{n}{n-1}(\eta_{n-2,n}(\alpha_n))^{(n-1)/n}+\frac{\alpha_n}{n-1}\text{.}\]
Thus,
\[y_1^{n-1}=(\eta_{n-2,n}(\alpha_n))^{(n-1)/n}=\frac{n-1-\alpha_n}{n}\text{,}\]
that is,
\[1-y_1^{n-1}=\frac{1+\alpha_n}{n}\text{.}\]
Therefore,
\[c_{\text{cfhov}}(n)=\frac{1}{n(1-x_1^{n-1})}=\frac{1}{n(1-y_1^{n-1})}=\frac{1}{1+\alpha_n}=c_{\text{iid}}(n)\text{.}\]
\end{proof}

This means that Correa et al.'s algorithm also achieves the optimal competitive ratio for IID Prophet Inequality. However, while the optimal dynamic programming algorithm is guaranteed to achieve the largest possible profit on every input, it is unclear whether Correa et al.'s algorithm has this property.

\end{document}